\title{Sums of read-once formulas: How many summands suffice?}
\author{
Meena Mahajan %%\inst{1}
\and {Anuj Tawari} %%\inst{1}
}
\institute{%
  The Institute of Mathematical Sciences, Chennai, India.
  \email{\{meena,anujvt\}@imsc.res.in}
}
\begin{document}
\maketitle
\begin{abstract}
An arithmetic read-once formula (ROF) is a formula (circuit of fan-out
1) over 
$+, \times$ where each variable labels at most one leaf.
Every multilinear polynomial can be expressed as the sum of ROFs. 
In this work, we prove, for certain multilinear polynomials, 
a tight lower bound on the number of summands in such an expression.
\end{abstract}

\newtheorem{observation}[theorem]{Observation}
\newtheorem{claimn}[theorem]{Claim}
\newtheorem{fact}[theorem]{Fact}
\newcommand{\E}{{\mathbb E}}
\newcommand{\C}{{\mathbb C}}
\newcommand{\D}{{\mathcal D}}
\newcommand{\M}{{\mathcal M}}
\newcommand{\bigoh}{{\mathrm{O}}}
\newcommand{\smalloh}{{\mathrm{o}}}

\newenvironment{proof-sketch}{\noindent{\bf Sketch of Proof}\hspace*{1em}}{\qed\bigskip}

\newcommand{\enc}{{\sf Enc}}
\newcommand{\dec}{{\sf Dec}}
\newcommand{\Var}{{\rm Var}}
\newcommand{\Deg}{{\rm deg}}
\newcommand{\ROF}{{\rm ROF}}
\newcommand{\ROP}{{\rm ROP}}
\newcommand{\Z}{{\mathbb Z}}
\newcommand{\X}{{\mathbf x}}
\newcommand{\F}{{\mathbb F}}
\newcommand{\N}{{\mathbb N}}
\newcommand{\I}{{\mathcal I}}
\newcommand{\integers}{{\mathbb Z}^{\geq 0}}
\newcommand{\R}{{\mathbb R}}
\newcommand{\Q}{{\cal Q}}
\newcommand{\V}{{\mathbb V}}
\newcommand{\eqdef}{{\stackrel{\rm def}{=}}}
\newcommand{\from}{{\leftarrow}}
\newcommand{\vol}{{\rm Vol}}
\newcommand{\poly}{{\rm poly}}
\newcommand{\ip}[1]{{\langle #1 \rangle}}
\newcommand{\wt}{{\rm wt}}
\renewcommand{\vec}[1]{{\mathbf #1}}
\newcommand{\mspan}{{\rm span}}
\newcommand{\rs}{{\rm RS}}
\newcommand{\RM}{{\rm RM}}
\newcommand{\Had}{{\rm Had}}
\newcommand{\calc}{{\cal C}}
\renewcommand{\epsilon}{\varepsilon}
\renewcommand{\phi}{\varphi}

\section{Introduction}
\label{sec:intro}

Read-once formulae (ROF) are formulae (circuits of fan-out 1) in which
each variable appears at most once. A formula computing a polynomial
that depends on all its variables must read each variable at least
once. Therefore, ROFs compute some of the simplest possible functions that
depend on all of their variables. The polynomials computed by such
formulas are known as read-once polynomials (ROPs). Since every
variable is read at most once, ROPs are multilinear
 \footnote{A polynomial is said to be multilinear if the individual degree of 
 each variable is at most one.}. But not every
multilinear polynomial is a ROP. For example, $x_{1}x_{2} + x_{2}x_{3}
+ x_{1}x_{3}$.  

We investigate the following question: Given an $n$-variate
multilinear polynomial, can it be expressed as a sum of at most $k$
ROPs? It is easy to see that every bivariate multilinear polynomial is
an ROP. Any tri-variate multilinear polynomial can be expressed as a
sum of 2 ROPs. With a little thought, we can obtain a sum-of-3-ROPs
expression for any 4-variate multilinear polynomial.  An easy
induction on $n$ then shows that any $n$-variate multilinear
polynomial, for $n \ge 4$, can be written as a sum of at most $3
\times 2^{n-4}$ ROPs. We ask the following question: Does there exist a
strict hierarchy among $k$-sums of ROPs?  Formally,
\begin{problem}
  Consider the family of $n$-variate multilinear polynomials. For
  $1 < k \le 3 \times 2^{n-4}$, is $\sum^{k} \cdot \ROP$ strictly more powerful than   
  $\sum^{k-1} \cdot \ROP$? If so, what explicit polynomials witness
  the separations?

\end{problem}
We answer this affirmatively for $k \le \lceil n/2 \rceil$. In
particular, for $k = \lceil n/2 \rceil$, there exists an explicit
$n$-variate multilinear polynomial which cannot be written as a sum of
less than $k$ ROPs but it admits a sum-of-$k$-ROPs representation.

Note that $n$-variate ROPs are computed by linear sized formulas. Thus
if an $n$-variate polynomial $p$ is in $\sum^k \cdot \ROP$, then $p$
is computed by a formula of size $O(kn)$ where every intermediate node
computes a multilinear polynomial. Since superpolynomial lower bounds
are already known for the model of multilinear formulas \cite{Raz09},
we know that for those polynomials (including the determinant and the
permanent), a $\sum^k \cdot \ROP$ expression must have $k$ at least
quasi-polynomial in $n$. However the best upper bound on $k$ is only
exponential in $n$, leaving a big gap between the lower and upper
bound. On the other hand, our lower bound is provably tight.

A natural question to ask is whether stronger lower bounds than the
above result can be proven. 
In particular, to separate 
$\sum^{k-1}\cdot \ROP$  from $\sum^{k}\cdot \ROP$, how many variables
are needed? The above hierarchy result says that 
$2k-1$ variables suffice, but there may be simpler polynomials (with
fewer variables) witnessing this separation. 
We demonstrate another technique which
improves upon the previous result for $k=3$, showing that 4 variables suffice. In particular, we show
that over the field of reals, there exists an explicit multilinear
$4$-variate multilinear polynomial which cannot be written as a sum
of $2$ ROPs. This lower bound is again tight, as there is a sum of
$3$ ROPs representation for 
every 4-variate multilinear polynomial.

\subsection*{Our results and techniques}
\label{sec:our-results}

We now formally state our results.

\begin{theorem}
  \label{thm:hierarchy}
For each $n \ge 1$, the $n$-variate degree $n-1$ symmetric polynomial
$S_n^{n-1}$ cannot be written as a sum of less than $\lceil n/2
\rceil$ ROPs, but it can be written as a sum of $\lceil n/2 \rceil$
ROPs.
\end{theorem}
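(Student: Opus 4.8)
The plan is to handle the two directions separately: the easy upper bound (writing $S_n^{n-1}$ as a sum of $\lceil n/2 \rceil$ ROPs), and the harder lower bound (no fewer suffice).

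For the **upper bound**, recall that $S_n^{n-1} = \sum_{i=1}^{n} \prod_{j \neq i} x_j$. I would pair up the variables: group $\{x_1,x_2\}, \{x_3,x_4\}, \dots$. The key observation is that the product $x_1 x_2 \cdots x_n$ together with the "missing $x_i$" and "missing $x_j$" terms for a single pair $\{x_i, x_j\}$ can be combined into a read-once formula. Concretely, for each pair $\{x_{2\ell-1}, x_{2\ell}\}$, the partial polynomial $\big(x_{2\ell-1} + x_{2\ell} + x_{2\ell-1}x_{2\ell} \cdot(\text{something})\big)$ times the product of all other variables is read-once; summing over the $\lceil n/2 \rceil$ pairs (with a small adjustment, e.g. subtracting off overcounted copies of the full monomial $x_1\cdots x_n$, which itself is an ROP and can be absorbed) yields $S_n^{n-1}$. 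I would write out the algebra to confirm that exactly $\lceil n/2 \rceil$ ROPs suffice, treating the odd-$n$ case (one leftover variable) and the small base cases $n \le 3$ directly.

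For the **lower bound**, the natural approach is a structural/dimension argument. Suppose $S_n^{n-1} = f_1 + \cdots + f_k$ with each $f_i$ an ROP. The idea is to exploit that each ROP $f_i$, viewed at its top gate, either is a sum of ROPs on disjoint variable sets or a product of ROPs on disjoint variable sets, and in either case a single ROP "sees" a limited amount of the pairwise interaction structure. One clean route: consider first-order partial derivatives. We have $\partial S_n^{n-1} / \partial x_i = S_{n-1}^{n-2}$ on the remaining variables — so all $n$ of these derivatives are "full" degree-$(n-2)$ elementary-type polynomials. By contrast, for an ROP $f$, if $x_i$ appears under a product gate one of whose other inputs is a nonconstant ROP $g$ in a variable set $T$, then $\partial f / \partial x_i$ is divisible by $g$, hence "trivial" on the variables of $T$ in a strong sense; tracking this, each ROP can make at most two of the coordinates $x_i$ "behave non-trivially" with respect to any fixed other coordinate $x_j$, while $S_n^{n-1}$ forces all $\binom{n}{2}$ pairs to interact. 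Counting pairs covered, $k$ ROPs cover at most (roughly) $k$ "disjoint non-interacting blocks," forcing $2k \ge n$, i.e. $k \ge \lceil n/2 \rceil$.

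The **main obstacle** is making this last counting argument rigorous: an ROP is not simply a product of variables, so "the interaction structure of $f_i$" must be defined carefully (likely via the parse-tree / gate structure of a canonical ROF, or via the set of variable pairs $\{x_i,x_j\}$ such that the bivariate restriction of $f_i$ — setting all other variables to generic constants — is a genuinely bivariate non-product polynomial). I expect the crux is a lemma of the form: for any ROP $f$, the graph on its variables with an edge $\{x_i,x_j\}$ whenever $x_i,x_j$ "interact non-trivially" in $f$ is a disjoint union of cliques arising from the sum-gates, and more importantly a single ROP cannot produce the dense interaction pattern of $S_n^{n-1}$ — I would prove this by induction on formula structure, with the sum gate and product gate each handled by the disjointness of their children's variable sets and the multilinearity/degree bookkeeping of $S_n^{n-1}$.
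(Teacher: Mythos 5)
Your upper bound is essentially the paper's construction: pairing the variables and using ROPs of the form $(x_{2\ell-1}+x_{2\ell})\cdot\prod_{k\neq 2\ell-1,2\ell}x_k$ works, and in fact no correction term is needed, since the two monomials of $S_n^{n-1}$ missing $x_{2\ell-1}$ and $x_{2\ell}$ sum exactly to that product (for odd $n$ the leftover monomial $x_1\cdots x_{n-1}$ is itself an ROP). The problem is the lower bound, which is the heart of the theorem, and there your counting scheme has a genuine gap. The proposed crux---that the ``interaction graph'' of an ROP (edges on pairs $x_i,x_j$ that interact nontrivially) is sparse or block-structured, so that $k$ ROPs can only cover the dense interaction pattern of $S_n^{n-1}$ if $2k\geq n$---is false as stated. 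A single ROP can already have \emph{all} $\binom{n}{2}$ pairs interacting: the monomial $x_1x_2\cdots x_n$, or more generally any multiplicative ROP such as $(x_1+x_2)(x_3+x_4)\cdots$, has $\partial_{x_i}\partial_{x_j}f\neq 0$ for every pair, and its generic bivariate restrictions are genuinely bivariate. Moreover, the interaction graph of an ROP need not be a disjoint union of cliques: for $f=x_1(x_2+x_3x_4)$ the pairs $\{1,2\},\{1,3\},\{1,4\},\{3,4\}$ interact but $\{2,3\}$ does not. So no count of interacting pairs or of ``non-interacting blocks'' can force $k\geq\lceil n/2\rceil$; the hardness of $S_n^{n-1}$ is not captured by pairwise interaction density.

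For comparison, the paper's lower bound uses the opposite observation: in a minimal decomposition every summand must have \emph{all} second partials nonzero (otherwise applying $\partial_{x_a}\partial_{x_b}$ kills that summand and yields $S_{n-2}^{n-3}$-type polynomial as a sum of $k-1$ ROPs, contradicting induction), hence every summand is a multiplicative ROP. Then a structural lemma for multiplicative ROPs (for each $x_i$ there exist $x_j$ and $\gamma$ with $\partial_{x_j}(f_k)\mid_{x_i=\gamma}=0$) lets one eliminate a summand by one derivative plus one substitution; if $\gamma\neq 0$ the restricted polynomial is $\alpha' S_{n-2}^{n-2}+\beta' S_{n-2}^{n-3}$ with $\beta'\neq 0$, which forces the induction hypothesis to be strengthened to the whole family $\alpha S_n^n+\beta S_n^{n-1}$, $\beta\neq 0$; if all such $\gamma$ are $0$, setting one variable to $0$ drops the degree of every summand by at least $2$, contradicting that $S_{n-1}^{n-1}$ has degree $n-1$. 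Your proposal contains neither the elimination-by-restriction mechanism nor the strengthened inductive family, and the counting argument it relies on instead cannot be repaired in its current form.
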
 

The idea behind the lower bound is that if $g$ can be expressed as a sum of
less than $\lceil n/2 \rceil$ ROFs, then one of the ROFs can be
eliminated by taking partial derivative with respect to one variable
and substituting another by a field constant. We then use the
inductive hypothesis to arrive at a contradiction.  This approach
necessitates a stronger hypothesis than the statement of the theorem,
and we prove this stronger statement in %Lemma~\ref{lem:lower-bound}.
Theorem~\ref{thm:hierarchy-general}.

\begin{theorem}
  \label{thm:hard-for-2-sum}
There is an explicit $4$-variate multilinear polynomial $f$ which
cannot be written as the sum of $2$ ROPs over $\R$. 
\end{theorem}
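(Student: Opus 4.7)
The plan is to assume, toward a contradiction, a decomposition $f = R_1 + R_2$ into two $4$-variate ROPs and derive a contradiction by a case analysis on the top gates of $R_1$ and $R_2$; the candidate $f$ will be an explicit low-degree multilinear polynomial (say a weighted quadratic $\sum_{i<j} c_{ij} x_i x_j$, possibly augmented by cubic/quartic terms) whose real coefficients are tuned so that the one case which is solvable over $\C$ becomes infeasible over $\R$.

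The starting observation is that every $4$-variate ROP $R$ has a well-defined top gate that partitions its variables into two non-empty sides $A \sqcup B = \{x_1,\dots,x_4\}$, and is either of sum type ($R = R_A + R_B$ up to constants) or product type ($R = R_A R_B + c$ after pulling out leading constants). Only seven unordered bipartitions of $\{x_1,\dots,x_4\}$ arise --- four of type $(1,3)$ and three of type $(2,2)$ --- and each, together with the sum-or-product label, imposes a clean algebraic constraint on the coefficients of $R$: for instance, the $|A|\times|B|$ off-diagonal block of the quadratic coefficient matrix of $R$ is identically zero for a sum split and has rank at most one for a $(2,2)$-product split. With $f = R_1 + R_2$, each of the finitely many pairs of top-gate choices for $(R_1, R_2)$ yields a system of polynomial equations in the internal constants of $R_1, R_2$, with the coefficients of $f$ as parameters.

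The main step is then to select the coefficients of $f$ so that every such system is infeasible over $\R$. The expected main obstacle is the case where both $R_1$ and $R_2$ are $(2,2)$-products with \emph{different} top partitions, because here the constraints from two distinct rank-one blocks must be combined with the ``intra-part'' coefficients rather than read off directly; this is where the $\R$-specificity enters, typically via an equation of the form $u^2+v^2+\delta=0$ with $\delta>0$, which is solvable over $\C$ but not over $\R$. The remaining cases --- both top gates sum splits, sum-vs-product combinations, $(1,3)$-splits, and same-partition $(2,2)$-products --- should fall to shorter coefficient-matching arguments. The small value $n=4$ is what keeps the enumeration bounded, and it is precisely this reliance on explicit finite case analysis that explains why the present technique is tailored to separating $\sum^{2}\cdot\ROP$ from $\sum^{3}\cdot\ROP$ rather than yielding a general hierarchy.
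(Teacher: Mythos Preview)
Your plan is essentially the paper's: a case analysis on the top gates of the two ROFs, with the hard case being two $(2,2)$-products on distinct partitions and the $\R$-specificity entering through a quadratic whose discriminant must be a square. The paper packages the case analysis as a structural lemma (any $4$-variate sum of two ROPs satisfies one of three properties $C1',C2',C3'$), takes the symmetric family $f^{\alpha,\beta,\gamma}=\alpha(x_1x_2+x_3x_4)+\beta(x_1x_3+x_2x_4)+\gamma(x_1x_4+x_2x_3)$ as the candidate, and dispatches the hard case with the \emph{commutator} $\triangle_{12}f$ (which $l_2$ must divide) rather than a direct rank-one-block computation; your rank argument would reach the same quadratic in the end. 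One small gap to patch: you assume each summand is a genuinely $4$-variate ROP, but the decomposition $f=R_1+R_2$ allows either $R_i$ to read fewer than four variables, and the paper handles that sub-case separately (via Proposition~\ref{prop:3-var-ROP}) before entering the main case split.
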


The proof of this theorem mainly relies on a structural lemma
(Lemma~\ref{lem:structure-2-sum-4-variate}) for sum of $2$ read-once
formulas. In particular, we show that if $f$ can be written as a sum
of $2$ ROPs then one of the following must be true: 1.~Some 2-variate
restriction is a linear polynomial.  2.~There exist variables
$x_{i}, x_{j} \in \Var(f)$ such that the polynomials
$x_{i}, x_{j}, \partial_{x_{i}}(f), \partial_{x_{j}}(f)$ are affinely
dependent. 3.~We can represent $f$ as
$f = l_{1} \cdot l_{2} + l_{3} \cdot l_{4}$ where
$(l_{1}, l_{2})$ and $(l_{3}, l_{4})$ are
variable-disjoint linear forms. Checking the first two conditions is
easy. For the third condition we use the commutator of $f$, introduced 
in \cite{SV14}, to find one of the $l_{i}$'s. The knowledge of one
of the $l_{i}$'s suffices to determine all the linear forms. Finally,
we construct a $4$-variate polynomial which does not satisfy any of
the above mentioned conditions. This construction does not work over
algebraically closed fields. We do not yet know how to construct an
explicit 4-variate multilinear polynomial not expressible as the sum
of 2 ROPs over such fields, or even whether such polynomials exist.  

\subsection*{Related work}
\label{sec:related-work}

Despite their simplicity, ROFs have received a lot of attention both in the
arithmetic as well as in the Boolean world \cite{HH91,BHH95a,BB98,BC98,SV09,SV14}. The most
fundamental question that can be asked about polynomials is the
polynomial identity testing (PIT) problem: Given an arithmetic circuit
$\mathcal{C}$, is the polynomial computed by $\mathcal{C}$ identically
zero or not. PIT has a randomized polynomial time algorithm: Evaluate
the polynomial at random points. It is not known whether PIT has a
deterministic polynomial time algorithm. In 2004, Kabanets and
Impagliazzo established a connection between PIT algorithms and
proving general circuit lower bounds \cite{KI04}. However, for
restricted arithmetic circuits, no such result is known. For instance,
consider the case of multilinear formulas. Even though strong lower
bounds are known for this model, there is no efficient deterministic
PIT algorithm. For this reason, PIT was studied for the weaker model
of sum of read-once formulas. Notice that multilinear depth $3$
circuits are a special case of this model. 

Shpilka and Volkovich gave a deterministic PIT algorithm for the sum
of a small number of ROPs \cite{SV09}. Interestingly, their proof uses
a lower bound for a weaker model, that of $0$-justified ROFs. In
particular, they show that the polynomial $\mathcal{M}_{n} =
x_{1}x_{2} \cdots x_{n}$, consisting of just a single monomial, cannot
be represented as a sum of less than $n/3$ weakly justified ROPs. More
recently, Kayal showed that if $\mathcal{M}_{n}$ is represented as a
sum of powers of low degree (at most $d$) polynomials, then the number
of summands is at most $\exp(\Omega(n/d))$ \cite{Kayal15}. He used this
lower bound to give a PIT algorithm. Our lower bound from
Theorem~\ref{thm:hierarchy} is orthogonal to both these results and is
provably tight. An interesting question is whether it can be used
to give a PIT algorithm.

\subsection{Organization}
\label{sec:org}
The paper is organized as follows. In Section~\ref{sec:prelim} we give
the basic definitions and notations. In Section~\ref{sec:hierarchy},
we establish Theorem~\ref{thm:hierarchy}. 
showing that the hierarchy of $k$-sums of ROPs is proper. 
In Section~\ref{sec:hard-4-variate} we establish
Theorem~\ref{thm:hard-for-2-sum}, showing an explicit 4-variate
multilinear polynomial that is not expressible as the sum of two
ROPs. We conclude in 
Section~\ref{sec:concl} with some further questions that are still open.

\section{Preliminaries}
\label{sec:prelim}

For a positive integer $n$, we denote $ [n] = \{ 1,2, \ldots,
n\}$. For a polynomial $f$, by $\Var(f)$ we mean the set of variables
occurring in $f$. For a polynomial $f(x_{1}, x_{2}, \ldots, x_{n})$, a
variable $x_{i}$ and a field element $\alpha$, we denote by
$f \mid_{x_{i} = \alpha}$ the polynomial resulting from setting
$x_{i}= \alpha$. Let $f$ be an $n$-variate polynomial. We say that $g$
is a $k$-variate restriction of $f$ if $g$ is obtained by setting some
variables in $f$ to field constants and $|\Var(g)| \leq k$. A set of
polynomials $f_{1}, f_{2}, \ldots, f_{k}$ over the field $\F$ is said
to be affinely dependent if there exist constants $\alpha_{1},
\alpha_{2}, \ldots, \alpha_{k}$ such that $\displaystyle\sum_{i \in
  [k]} \alpha_{i} f_{i}  = 0$. 

The $n$-variate degree $k$ elementary symmetric polynomial, denoted
$S_n^k$, is defined as follows:
\[ S_n^k(x_1, \ldots , x_n) = \sum_{A\subseteq [n], |A|=k} ~~~~\prod_{i\in
A} x_i.\]

A circuit is a directed acyclic graph with variables and field
constants labeling the leaves, field operations $+, \times$ labeling
internal nodes, and a designated sink node. Each node naturally
computes a polynomial; the polynomial at the designated sink node is
the polynomial computed by the circuit. If the underlying undirected
graph is also acyclic, then the circuit is called a formula. A formula
is said to be read-$k$ if each variable appears as  a leaf label at
most  $k$ times.

For read-once formulas, it is more convenient to use the following
``normal form'' from \cite{SV09}. 

\begin{definition}[Read-once formulas \cite{SV09}]
  \label{def:ROF}
A read-once arithmetic formula (ROF) over a field $\F$ in the
variables  $\{x_{1}, x_{2}, \allowbreak \ldots, x_{n}\}$ is a binary
tree as follows. The leaves are labeled by variables and internal
nodes by $\{+, \times\}$. In addition, every node is labeled by a pair
of field elements $(\alpha, \beta) \in \F^{2}$. Each input variable
labels at most once leaf. The computation is performed in the
following way. A leaf labeled by $x_{i}$ and $(\alpha, \beta)$
computes $\alpha x_{i} + \beta$. If a node $v$ is labeled by
$\star \in \{+, \times\}$ and $(\alpha, \beta)$ and its children compute the
polynomials $f_{1}$ and $f_{2}$, then $v$ computes
$\alpha (f_{1} \star f_{2}) + \beta$. 
\end{definition}
We say that $f$ is a read-once polynomial (ROP) if it can be computed
by a ROF, and is in $\sum^k \cdot \ROP$ if it can be
expressed as the sum of at most $k$ ROPs. 

\begin{proposition}
\label{prop:easy-upper-bound}
For every $n$, every $n$-variate multilinear polynomial can be written
as the sum of at most $\lceil 3 \times 2^{n-4}\rceil$ ROPs.
\end{proposition}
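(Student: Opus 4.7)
The plan is to prove the bound by induction on $n$, with the base case $n = 4$ being the critical one. For $n \leq 2$ any multilinear polynomial is itself a single ROP: when $n = 2$, if the coefficient of $x_1 x_2$ is $d \ne 0$, then $a + b x_1 + c x_2 + d x_1 x_2 = d^{-1}(d x_1 + c)(d x_2 + b) + (a - bc/d)$ is a single ROF, and if $d = 0$ the polynomial is already affine. For $n = 3$, the identity $f = f\mid_{x_3 = 0}(x_1, x_2) + x_3 \cdot \partial_{x_3} f(x_1, x_2)$ writes $f$ as the sum of a bivariate ROP and $x_3$ times a bivariate ROP, giving $\lceil 3 \cdot 2^{-1}\rceil = 2$ summands.

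For the inductive step with $n \geq 5$, I apply the same splitting: $f = f\mid_{x_n=0} + x_n \cdot \partial_{x_n} f$. Both summands are $(n-1)$-variate multilinear polynomials, so by the inductive hypothesis each is a sum of at most $3 \cdot 2^{n-5}$ ROPs. Since $\partial_{x_n} f$ does not involve $x_n$, multiplying each of its ROPs by $x_n$ preserves the ROP property, and adding the two decompositions gives at most $2 \cdot 3 \cdot 2^{n-5} = 3 \cdot 2^{n-4}$ summands, as required.

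The main obstacle is the base case $n = 4$, where the same splitting would give $2 \cdot 2 = 4$ ROPs rather than the required $3$. I would write $f$ as a polynomial in $(x_1, x_2)$ with coefficients in $\F[x_3, x_4]$, namely $f = a + b x_1 + c x_2 + d x_1 x_2$ with $a, b, c, d$ bivariate, and attempt to fuse two of the four natural ROP summands $a$, $b x_1$, $c x_2$, $d x_1 x_2$ into a single ROP. The easy sub-case is when $d$ is a field scalar $\kappa$: then $\kappa x_2 + b(x_3, x_4)$ is itself an ROP (its subtrees have disjoint variable supports), so $x_1(\kappa x_2 + b)$ is a single ROP and $f = x_1(\kappa x_2 + b) + x_2 c + a$ uses exactly $3$ ROPs. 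A symmetric argument works whenever some cross-derivative $\partial_{x_i}\partial_{x_j} f$ is a field scalar. The hardest step will be handling the generic case where every such cross-derivative is a non-scalar bivariate polynomial; here I expect to do a further case analysis on the joint support of $a, b, c, d$, using more elaborate ROP tree shapes such as $(x_1 + \alpha(x_3))(x_2 + \beta(x_4))$ with univariate $\alpha, \beta$ to merge two summands, and verifying that such a fusion is always possible is the technical crux of the proof.
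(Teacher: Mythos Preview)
Your inductive framework for $n \ge 5$ and the small cases $n \le 3$ match the paper exactly. The gap is at $n=4$, which you correctly flag as the crux but do not close. Your easy sub-case (some $\partial_{x_i}\partial_{x_j}f$ is a field scalar) is far more restrictive than needed, and the tree shapes $(x_1+\alpha(x_3))(x_2+\beta(x_4))$ you propose for the generic case do not obviously suffice: such a product carries no $x_1x_2x_3x_4$ term and only two of the six degree-$2$ monomials, so it is unclear how a single factor of this form can absorb enough of a general $f$ to merge two of your four summands into one.

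The paper's resolution at $n=4$ is to case-split on whether $f$ has \emph{any} nonzero degree-$2$ coefficient. If all $A_{ij}=0$, three explicit ROPs (one linear, one of the form $x_1x_2\cdot(\text{linear in }x_3,x_4)$, one of the form $x_3x_4\cdot(\text{bivariate in }x_1,x_2)$) suffice. Otherwise, say $A_{13}\neq 0$: take $f_1$ to collect the constant, linear, $x_1x_2$, and $x_3x_4$ terms; take
\[
f_2 \;=\; \bigl(A_{13}x_1 + A_{23}x_2 + A_{123}x_1x_2\bigr)\Bigl(x_3 + \tfrac{A_{14}}{A_{13}}x_4 + \tfrac{A_{134}}{A_{13}}x_3x_4\Bigr),
\]
a product of two variable-disjoint bivariate ROPs chosen so that every monomial of $f - f_1 - f_2$ is divisible by $x_2x_4$; then $f_3 = x_2x_4\cdot(\text{bivariate in }x_1,x_3)$ is a single ROP. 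The idea you are missing is this pivot on one nonzero quadratic coefficient, which forces the remainder into a single ROP without an open-ended case analysis.
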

\begin{proof}
For $n=1,2,3$  this is easy to see. 

For $n=4$, let $f(X)$ be given by the expression $\sum_{S \subseteq
[4]} A_S x_S$, where $x_S$ denotes the monomial $\prod_{i \in S}x_i$. 
We want to express $f$ as 
$f_1+f_2+f_3$, where each $f_i$ is an ROP. 
If there are no degree $2$ terms, we use the following: 
\begin{eqnarray*}
f_1 &=& A_\emptyset + A_1x_1 + A_2x_2 + A_3x_3 +A_4x_4 \\%\sum_i A_ix_i \\
f_2 &=& x_1x_2(A_{123}x_3 +A_{124}x_4) \\
f_3 &=& x_3x_4(A_{134}x_1 +A_{234}x_2 + A_{1234}x_1x_2)
\end{eqnarray*}
Otherwise, assume wlog that $A_{13} \neq 0$. Then define
\begin{eqnarray*}
f_1 &=& \left[\displaystyle\sum_{S \subseteq [2]}
 A_{S} \displaystyle\prod_{i \in S} x_i\right]
 + \left[ \displaystyle\sum_{\emptyset\neq S \subseteq \{3,4\}}
 A_{S} \displaystyle\prod_{i \in S} x_i \right] \\
f_2 &=&  \left(A_{13} x_1 + A_{23} x_2 + A_{123} x_1x_2\right) \cdot
 \left(\frac{A_{14}}{A_{13}} x_4 + x_3 + \frac{A_{134}}{A_{13}}
 x_3x_4 \right) \\
f_3 &=&  x_2x_4 \left[
\left(A_{24} - \frac{A_{14}A_{23}}{A_{13}}\right) 
+ x_1 \left(A_{124} - \frac{A_{14}A_{123}}{A_{13}}\right) 
\right. \\
&& \left.
~~~~~~~~+ x_3\left(A_{234} - \frac{A_{134}A_{23}}{A_{13}}\right) 
+ x_1x_3\left(A_{1234} - \frac{A_{134}A_{123}}{A_{13}}\right)
\right]
 \end{eqnarray*}
Since any bivariate multilinear polynomial is a ROP, each $f_i$ is
indeed an ROP.

For $n > 4$, express $f$ as $x_n g + h$ where $g = \partial_{x_n}f$
and $h=f\mid _{x_n=0}$, and use induction, along with the fact that $g$
does not have variable $x_n$.
\qed\end{proof}

\begin{fact}[Useful Fact about ROPs \cite{SV10}]
\label{fact:ROP}
The partial derivatives of ROPs are also ROPs.
\end{fact}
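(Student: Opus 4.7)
The plan is to prove this by structural induction on a read-once formula $\Phi$ computing the given ROP $f$. In the base case, $\Phi$ is a single leaf with label $(\alpha,\beta)$ on some variable $x_j$, so $f = \alpha x_j + \beta$; then $\partial_{x_i} f$ is the constant $\alpha$ when $i=j$ and the constant $0$ otherwise, and in either case it is trivially an ROP.

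For the inductive step I would look at the root of $\Phi$, which is labeled by an operation $\star \in \{+,\times\}$ together with a pair $(\alpha,\beta)$, and has subformulas $\Phi_1,\Phi_2$ computing polynomials $f_1,f_2$. The key structural property of an ROF that I will use repeatedly is that $\Var(f_1) \cap \Var(f_2) = \emptyset$, so the differentiation variable $x_i$ appears in at most one of the two children. If it appears in neither, then $\partial_{x_i} f = 0$, which is an ROP. Otherwise, assume without loss of generality that $x_i \in \Var(f_1)$.

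For $\star = +$ we have $\partial_{x_i} f = \alpha \cdot \partial_{x_i} f_1$, and the inductive hypothesis together with the easy observation that scaling an ROP by a constant yields an ROP finishes this case. For $\star = \times$ the product rule gives $\partial_{x_i} f = \alpha \cdot (\partial_{x_i} f_1) \cdot f_2$. By induction, $\partial_{x_i} f_1$ is computed by some ROF $\Psi_1$, and a small strengthening of the inductive statement records that $\Var(\Psi_1) \subseteq \Var(f_1) \setminus \{x_i\}$. Since $\Var(\Psi_1)$ is then disjoint from $\Var(\Phi_2)$, joining $\Psi_1$ and $\Phi_2$ under a new $\times$-root with label $(\alpha,0)$ yields a bona fide ROF computing $\partial_{x_i} f$.

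There is no serious obstacle here; the only point worth tracking carefully is the bookkeeping claim that the inductively produced ROF for $\partial_{x_i} f_1$ uses only variables from $\Var(f_1) \setminus \{x_i\}$, which is what lets us re-attach the $f_2$ factor without violating the read-once condition. This is immediate from the inductive construction above, since at every level we only ever delete the leaf labeled $x_i$ and possibly its ancestors, never introducing new variables.
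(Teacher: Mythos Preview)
The paper does not actually prove this fact; it is stated as a citation to \cite{SV10} with no accompanying argument. Your structural induction on the ROF is exactly the standard proof and is correct: the crucial observation is that the two children of any internal node are variable-disjoint, so the product rule at a $\times$-gate collapses to a single term, and the strengthened hypothesis $\Var(\Psi_1)\subseteq \Var(f_1)\setminus\{x_i\}$ is precisely what is needed to recombine $\Psi_1$ with $\Phi_2$ into a valid ROF. The only nitpick is the degenerate case where $\Phi_1$ is the leaf labeled $x_i$ itself, so $\partial_{x_i}f_1$ is a constant and $\Var(f_1)\setminus\{x_i\}=\emptyset$; under the paper's definition leaves must carry a variable, so strictly speaking there is no variable-free ROF. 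This is harmless: in the $\times$-case you can absorb the constant directly into the label of the root of $\Phi_2$, and in the $+$-case the derivative is a field element, which is trivially an ROP.
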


\begin{proposition}[3-variate ROPs]
\label{prop:3-var-ROP}
Let $f \in \F[x_{1}, x_{2}, x_{3}]$ be a $3$-variate ROP. Then there
exists $i \in [3]$ and $A \in \F$ such that $\deg(f\mid_{x_{i}= A})
\leq 1$. 
\end{proposition}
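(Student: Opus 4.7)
The plan is a direct structural analysis. Since every variable labels at most one leaf, an ROF for a polynomial that genuinely depends on all three of $x_1,x_2,x_3$ has exactly three leaves, and a binary tree with three leaves has a unique shape: the root has one leaf child (WLOG labeled $x_1$) and an internal-node child whose two children are the remaining leaves. Folding in the $(\alpha,\beta)$ labels from Definition~\ref{def:ROF}, this forces
\[ f \;=\; \alpha_0\bigl((a_1 x_1 + b_1)\;\star_0\; g(x_2,x_3)\bigr) + \beta_0, \qquad g \;=\; \alpha_2\bigl((a_2 x_2 + b_2)\;\star_2\; (a_3 x_3 + b_3)\bigr) + \beta_2, \]
with $\star_0,\star_2 \in \{+,\times\}$ and each $a_i \neq 0$ (otherwise $x_i$ would not appear in $f$ at all).

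Next I would split on the two operation symbols. If $\star_0 = \times$, substituting $x_1 \leftarrow -b_1/a_1$ kills the left factor and leaves $f|_{x_1=-b_1/a_1} = \beta_0$, which has degree $0$. If $\star_0 = +$ and $\star_2 = +$, then $g$ is a sum of two affine forms, hence linear, so $f$ itself is linear and any $(i,A)$ works. In the remaining case $\star_0 = +,\ \star_2 = \times$ I would substitute into the inner product instead of the outer leaf: setting $x_2 \leftarrow -b_2/a_2$ collapses $g$ to the constant $\beta_2$, so $f|_{x_2=-b_2/a_2} = \alpha_0(a_1 x_1 + b_1) + \alpha_0\beta_2 + \beta_0$, which is affine in $x_1$.

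The case where $f$ depends on only one or two of its three variables is a strict simplification of the same argument: a bivariate ROP is either a sum of two affine forms (already linear) or a product of two affine forms (one of which can be zeroed out by substituting one variable), and a univariate ROP is already affine by multilinearity of ROPs. The only point requiring care is that the coefficient used in each substitution is nonzero, but this is forced by the hypothesis that $f$ genuinely depends on the corresponding variable; I do not anticipate any substantive obstacle beyond this bookkeeping. The heart of the argument is simply the rigid shape of a three-leaf binary tree combined with the standard trick of killing an affine factor sitting below a multiplication gate.
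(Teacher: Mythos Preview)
Your proof is correct and follows essentially the same structural decomposition as the paper's, splitting on the root gate of the ROF. The only difference is cosmetic: in the $\star_0=+$ case the paper observes that setting $x_2$ to \emph{any} field element already leaves a sum of two univariate multilinear (hence affine) polynomials, so the further split on $\star_2$ is unnecessary.
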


\begin{proof}
Assume without loss of generality that $f = f_{1}(x_{1}) \star
f_{2}(x_{2}, x_{3}) + c$ where $\star \in \{+, \times\}$ and $c \in
\F$. If $\star = +$, then for all $A \in \F$, $\deg(f\mid_{x_{2}=A})
\leq 1$. If $\star = \times$, $\deg(f\mid_{f_{1} = 0})\leq 1$.  
\qed \end{proof}

We will also be dealing with a special case of ROFs called
multiplicative ROFs defined below: 

\begin{definition}[Multiplicative Read-once formulas]
\label{def:m-rof}
  A ROF is said to be a multiplicative ROF if it does not contain any
addition gates. We say that $f$ is a multiplicative ROP if it can be
computed by a multiplicative ROF. 
\end{definition}

Multiplicative ROPs have the following useful property, observed
in \cite{SV10}. (See Lemma 5.13 in \cite{SV10}. For completeness, and
since we refer to the proof later, we include a proof sketch here.)  
\begin{lemma}[\cite{SV10}]
  \label{lem:mrops}
Let $g$ be a multiplicative ROP with $|\Var(g)| \geq 2$. For every
$x_{i} \in \Var(g)$, there exists $x_{j} \in \Var(g)\setminus \{x_i\}$
and $\gamma \in \F$ 
such that $\partial_{x_{j}}(g) \mid_{x_{i} = \gamma} = 0$.
\end{lemma}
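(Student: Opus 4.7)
The plan is to prove the lemma by structural induction on the multiplicative ROF $F$ computing $g$, exploiting the fact that every internal node of $F$ is a multiplication node whose two children are roots of variable-disjoint subtrees. The base case handles $|\Var(g)| = 2$, and the inductive step splits on the root decomposition $g = \alpha(f_1 \cdot f_2) + \beta$.

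For the base case, both child subtrees must each be a single leaf (otherwise $g$ would have more than two variables), so $g = \alpha(\alpha_1 x_i + \beta_1)(\alpha_2 x_j + \beta_2) + \beta$. Since both variables genuinely appear and $g$ actually computes a nonconstant polynomial, the scalars $\alpha, \alpha_1, \alpha_2$ are nonzero, and choosing $\gamma = -\beta_1/\alpha_1$ kills the factor involving $x_i$ in $\partial_{x_j}(g) = \alpha \alpha_2 (\alpha_1 x_i + \beta_1)$.

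For the inductive step, I would write $g = \alpha(f_1 f_2) + \beta$ with $\Var(f_1)\cap\Var(f_2) = \emptyset$, and assume WLOG that $x_i \in \Var(f_1)$. If $|\Var(f_1)| \geq 2$, I would invoke the inductive hypothesis on $f_1$ to obtain $x_j \in \Var(f_1)\setminus\{x_i\}$ and $\gamma \in \F$ with $\partial_{x_j}(f_1)\mid_{x_i=\gamma} = 0$; by variable-disjointness of the siblings, $\partial_{x_j}(g) = \alpha f_2 \cdot \partial_{x_j}(f_1)$, and since $f_2$ is independent of $x_i$, setting $x_i = \gamma$ makes the whole expression vanish. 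If instead $|\Var(f_1)| = 1$, then $f_1$ is just a leaf $\alpha_1 x_i + \beta_1$ with $\alpha_1 \neq 0$; I would pick any $x_j \in \Var(f_2)$ (nonempty since $|\Var(g)|\geq 2$) and $\gamma = -\beta_1/\alpha_1$, so that $\partial_{x_j}(g) = \alpha(\alpha_1 x_i + \beta_1)\partial_{x_j}(f_2)$ vanishes at $x_i = \gamma$.

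The calculations are routine; the only thing to watch is that $x_j$ is always drawn from $\Var(g)\setminus\{x_i\}$, which is guaranteed by the variable-disjointness of sibling subtrees in a read-once formula. There is no substantive obstacle, and the same inductive skeleton should give the statement in the form it is cited from \cite{SV10}.
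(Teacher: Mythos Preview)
Your induction is correct, but it takes a different route from the paper's argument. The paper goes directly to the parent $v$ of the leaf labeled $x_i$: since $v$ is a $\times$ gate, the polynomial at $v$ factors as $P_v = (x_i-\alpha)h(\bar x)+c$ with $x_j$ chosen from $\Var(h)$, and one application of the chain rule to $g=Q(P_v,\ldots)$ gives $\partial_{x_j}(g)=\partial_y(Q)\cdot(x_i-\alpha)\cdot\partial_{x_j}(h)$, which vanishes at $x_i=\alpha$. Your structural induction from the root effectively walks down the path to that same node $v$, handling the two-leaf base case and the $|\Var(f_1)|=1$ case explicitly; unrolling your recursion recovers exactly the paper's choice of $x_j$ (a variable in the sibling of $x_i$'s leaf) and $\gamma$ (the root of the leaf form $\alpha_1 x_i+\beta_1$). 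The paper's direct argument has the advantage that it makes the value of $\gamma$ explicit in one line --- and the paper later \emph{uses} this, inferring in the proof of Lemma~\ref{lem:lower-bound} that if every such $\gamma$ is $0$ then every leaf form is of the shape $a_i x_i$. Your inductive proof yields the same conclusion, but the reader has to trace through the recursion to see it. One small point to tidy: your case split is on $|\Var(f_1)|$ while your induction is on the formula, so you are tacitly assuming no leaf has scalar $\alpha_1=0$ (else a subtree could have more leaves than variables); this is harmless once you pass to a reduced ROF, but is worth saying.
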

\begin{proof}
Let $\phi$ be a multiplicative ROF computing $g$. As $|\Var(\phi)| =
|\Var(g)| \geq 2$, $\phi$ has at least one gate. Let $v$ be the unique
neighbour (parent) of the leaf labeled by $x_{i}$. We denote by
$P_{v}(\bar{x})$ the ROP that is computed by $v$. Assume wlog that
$\Var(P_{v}) = \{x_{1}, x_{2}, \ldots, x_{i-1}, x_{i}\}$. We show that
there exists some ROP $Q$ such that $Q(P_{v}, x_{l+1}, x_{l+2},
\ldots, x_{n}) \equiv f_{i}$ where $Q$ and $P_{v}$ are
variable-disjoint ROPs. Consider the ROF $\phi_{i}$ computing
$f_{i}$. Denote with $\psi_{i}$ the subformula rooted at $v$. The
output of $\psi_{i}$ is wired as one of the inputs to $\phi_{i}$. Let
the resulting polynomial computed by $\phi_{i}$ be denoted as $Q$. It
follows that $Q(P_{v}, \bar{x}) \equiv f_{i}$. Also $Q$ and $P_{v}$
are variable-disjoint as they are computed by different parts of the
same ROP.

Since $v$ is a multiplication gate (recall that $\phi$ is a
multiplicative ROF) and neighbor of the leaf labeled by $x_{i}$,
$P_{v}$ can be written as $P_{v}(\bar{x}) = (x_{i} - \alpha)h(\bar{x})
+ c$ for some ROP $h$ such that $\Var(h) \neq \emptyset$ and $x_{i}
\not\in \Var(h)$ 

Finally, by the chain rule, for every variable $x_{j} \in \Var(h)$ we have that:
$$ \partial_{x_{j}}(f_{i}) = \partial_{y}(Q) \cdot
\partial_{x_{j}}(P_{v}) = \partial_{y}(Q) \cdot (x_{i} - \alpha) \cdot
\partial_{x_{j}}(h) $$ 
It follows that $\partial_{x_{j}}(g)\mid_{x_{i} = \alpha} = 0$.  
\qed \end{proof}

Along with partial derivatives, another operator that we will find
useful is the commutator of a polynomial.  The commutator of a
polynomial has previously been used for polynomial factorization and
in reconstruction algorithms for read-once formulas, see 
 \cite{SV14}.

\begin{definition}[Commutator \cite{SV14}]
  \label{def:commutator}
Let $P \in \F[x_{1}, x_{2}, \ldots, x_{n}]$ be a multilinear
polynomial and let $i, j \in [n]$. The commutator between
$x_{i}$ and $x_{j}$, denoted $\triangle_{ij}P$, is defined as follows.
$$ \triangle_{ij}P =
\left( P\mid_{x_{i}=0, x_{j}=0} \right) \cdot
\left( P\mid_{x_{i}=1, x_{j}=1} \right)
- \left( P\mid_{x_{i}=0, x_{j}=1} \right) \cdot
\left( P\mid_{x_{i}=1, x_{j}=0} \right)$$  
\end{definition}

The following property of the commutator will be useful to us.

\begin{lemma}
  \label{lem:commutator}
Let $f = l_{1}(x_{1}, x_{2}) \cdot l_{2}(x_{3}, x_{4}) + l_{3}(x_{1},
x_{3}) \cdot l_{4}(x_{2}, x_{4})$ where the $l_{i}$'s are linear
polynomials. Then $l_{2}$ divides $\triangle_{12}(f)$. 
\end{lemma}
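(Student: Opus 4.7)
The plan is to expand $\triangle_{12}(f)$ directly from the definition and use the variable-disjoint structure of the four linear forms to see that most of the cross terms cancel.

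First I would observe that for any constants $\alpha, \beta \in \F$, substituting $x_1 \mapsto \alpha$ and $x_2 \mapsto \beta$ in $f$ produces
$$f\mid_{x_1 = \alpha, x_2 = \beta} \; = \; l_1(\alpha, \beta) \cdot l_2(x_3, x_4) \; + \; l_3(\alpha, x_3) \cdot l_4(\beta, x_4).$$
The key structural point is that $l_3(\alpha, x_3)$ is a univariate polynomial in $x_3$ only and $l_4(\beta, x_4)$ is a univariate polynomial in $x_4$ only, since $l_3 \in \F[x_1, x_3]$ and $l_4 \in \F[x_2, x_4]$. So the ``remainder'' term in every restriction is a pure product of a $x_3$-polynomial and an $x_4$-polynomial.

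Next, to make the bookkeeping clean, I would introduce abbreviations: set
$A = l_1(0,0)$, $B = l_1(1,1)$, $C = l_1(0,1)$, $D = l_1(1,0)$, and
$p = l_3(0, x_3)$, $q = l_3(1, x_3)$, $r = l_4(0, x_4)$, $s = l_4(1, x_4)$. The four restrictions appearing in $\triangle_{12}(f)$ are then $A l_2 + pr$, $B l_2 + qs$, $C l_2 + ps$, and $D l_2 + qr$. Plugging into the definition of the commutator gives
$$\triangle_{12}(f) \; = \; (A l_2 + pr)(B l_2 + qs) \; - \; (C l_2 + ps)(D l_2 + qr).$$
Expanding, the degree-four terms contribute $pr \cdot qs$ to the first product and $ps \cdot qr$ to the second; both equal $pqrs$ and therefore cancel. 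Every remaining term contains at least one factor of $l_2$, so
$$\triangle_{12}(f) \; = \; l_2 \cdot \bigl[ (AB - CD)\, l_2 \; + \; A\, qs + B\, pr - C\, ps - D\, qr \bigr],$$
and $l_2$ divides $\triangle_{12}(f)$ as claimed.

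The only delicate point is the cancellation of the $pqrs$ terms, and this is precisely where the hypothesis on which variables each $l_i$ depends on is used: the ``crossed'' pairing $(x_1, x_3)$ and $(x_2, x_4)$ in $l_3 \cdot l_4$ ensures that, once $x_1$ and $x_2$ are both specialized, the remainders split into a product of a pure $x_3$-polynomial and a pure $x_4$-polynomial, whose values on the four corners of $\{0,1\}^2$ multiply out identically in the two diagonals. Everything else is routine factoring of $l_2$.
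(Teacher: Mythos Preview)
Your proof is correct and follows essentially the same approach as the paper's: both expand the commutator after plugging in the four corner values, observe that the pure $l_3 l_4$ contributions (your $pqrs$ terms) cancel, and conclude that every surviving term carries a factor of $l_2$. (One harmless slip: in your final bracket the cross terms should be $-C\,qr - D\,ps$ rather than $-C\,ps - D\,qr$, but this does not affect the conclusion.)
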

\begin{proof}
First, we show that $\triangle_{12}(l_{3} \cdot l_{4}) = 0$. Assume
$l_{3} = C x_{1} + m$ and $l_{4} = D x_{2} + n$ where $C, D \in \F$
and $m, n$ are linear polynomials in $x_3$, $x_4$ respectively. 
By definition,
$\triangle_{12}(l_{3} \cdot l_{4}) = mn(C + m)(D + n) -
m(D + n)(C + m)n = 0$. 

Now we write $\triangle_{12}f$ explicitly.
%The next step is to show that $l_{2}$ divides $\triangle_{12}f$.
Let $l_{1} = ax_{1} + bx_{2} + c$. By definition,  
\begin{align*}
\begin{aligned}
\triangle_{12}f &= \triangle_{12}(l_{1}l_{2} + l_{3}l_{4}) \\
&= (cl_{2}+ mn)((a+b+c)l_{2} + (C + m)(D + n)) - \\
&~~~ ((b+c)l_{2}+m(D + n))\cdot ((a+c)l_{2}+ n(C + m)) \\ 
&= l_{2}^{2} (c(a+b+c) - (a+c)(b+c))  \\
&~~~ + l_{2}(c(C + m)(D + n) +  mn(a+b+c) - n(b+c)(C + m) -m(a+c)(D + n)) 
 \end{aligned}
\end{align*}

It follows that $l_{2}$ divides $\triangle_{12}f$.
\qed \end{proof}

\section{A proper hierarchy in $\sum^k \cdot \textrm{ROP}$ }
\label{sec:hierarchy}
This section is devoted to proving Theorem~\ref{thm:hierarchy}.

We prove the lower bound for $S_n^{n-1}$ by induction. This
necessitates a stronger induction hypothesis,  so we will actually
prove the lower bound for a larger class of polynomials. 
For any $\alpha,\beta \in \F$, we define the polynomial

$\M_n^{\alpha,\beta} = \alpha S_{n}^{n} + \beta S_{n}^{n-1}$. 
We note the following recursive structure of $\M_n^{\alpha,\beta}$:
\begin{eqnarray*}
(\M_{n}^{\alpha,\beta}) \mid_{x_n=\gamma}  &=&
\M_{n-1}^{\alpha\gamma+\beta,\beta\gamma} ~~.  \\
\partial_{x_{n}}(\M_{n}^{\alpha,\beta})  &=& \M_{n-1}^{\alpha,\beta} ~~. 
\end{eqnarray*}

We show below that each $\M_n^{\alpha,\beta}$ is expressible as the
sum of $\lceil n/2 \rceil$ ROPs (Lemma~\ref{lem:upper-bound});
however, for any non-zero $\beta \neq 0$, $\M_n^{\alpha,\beta}$ cannot
be written as the sum of fewer than $\lceil n/2 \rceil$ ROPs
(Lemma~\ref{lem:lower-bound}).   At $\alpha=0$, $\beta=1$, we get
$S^{n-1}_n$, the simplest such polynomials, 
establishing Theorem~\ref{thm:hierarchy}.

\begin{lemma}
  \label{lem:lower-bound}
  Let $\F$ be a field. For every $\alpha \in \F$ and $\beta \in \F
  \setminus \{0\}$, the polynomial $\M_{n}^{\alpha,\beta} =
  \alpha S_{n}^{n} + \beta S_{n}^{n-1}$ cannot be written as a sum of
  $k < n/2$ ROPs.
\end{lemma}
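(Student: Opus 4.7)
The plan is to induct on $n$, exploiting the recursive identities
$\partial_{x_n}(\M_n^{\alpha,\beta}) = \M_{n-1}^{\alpha,\beta}$ and
$\M_n^{\alpha,\beta}|_{x_n=\gamma} = \M_{n-1}^{\alpha\gamma+\beta, \beta\gamma}$, which preserve the ``second parameter is nonzero'' condition whenever $\gamma \neq 0$. Given a supposed representation $\M_n^{\alpha,\beta} = \sum_{\ell=1}^k f_\ell$ with $\beta \neq 0$ and $k < n/2$, I want to find variables $x_i, x_j$ and a nonzero $\gamma$ so that the operation $g \mapsto \partial_{x_i}(g)|_{x_j = \gamma}$ sends $\M_n^{\alpha,\beta}$ to $\M_{n-2}^{\alpha\gamma+\beta, \beta\gamma}$ (which still has nonzero second parameter) while causing at least one summand $\partial_{x_i}(f_\ell)|_{x_j=\gamma}$ to vanish identically; the remaining summands are ROPs by Fact~\ref{fact:ROP} and closure of ROPs under substitution. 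Since $k-1 < (n-2)/2$, this would violate the IH on $\M_{n-2}$.

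The base cases cover $n \leq 3$. For $n \leq 2$, $k < n/2$ forces $k = 0$, which is impossible since $\M_n^{\alpha,\beta} \not\equiv 0$ when $\beta \neq 0$. For $n = 3$, $k \leq 1$; a short case analysis on the root of a hypothetical ROP for $\M_3^{\alpha,\beta}$ (using Proposition~\ref{prop:3-var-ROP} together with explicit coefficient matching) rules this out. For the inductive step when $n$ is even, substitution alone suffices: any $\gamma \neq 0$ sends the identity to $\M_{n-1}^{\alpha\gamma+\beta, \beta\gamma} = \sum_\ell f_\ell|_{x_n = \gamma}$, a sum of $k$ ROPs, and the IH on $n-1$ (for which $\lceil(n-1)/2\rceil = n/2$) immediately forces $k \geq n/2$, contradicting $k < n/2$.

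The main obstacle is the case $n$ odd, where substitution alone yields only $k \geq (n-1)/2$, consistent with $k < n/2$; so $k$ must equal $(n-1)/2$ exactly. Applying the IH to the partial-derivative identity over the now-even $n-1$ variables, together with a double-counting argument on total variable-occurrences across the $k$ ROFs, forces every $f_\ell$ to depend on every variable; iterating shows that every $\partial_{x_i}(f_\ell)$ depends on every remaining variable. These rigidity conditions forbid the parent of any leaf being a $+$-gate (since a $+$-gate near a leaf would let $\partial$ discard a sibling subtree's variables), so each $f_\ell$ looks multiplicative around its leaves. This should let me invoke a Lemma~\ref{lem:mrops}-style argument to locate $x_i, x_j \in \Var(f_\ell)$ and $\gamma$ with $\partial_{x_i}(f_\ell)|_{x_j = \gamma} = 0$ for some $\ell$; the delicate final point is ensuring $\gamma \neq 0$, which I expect to follow from the constraint that the nontrivial part $\beta S_n^{n-1}$ of $\M_n^{\alpha,\beta}$ must be produced by the sum, preventing all relevant leaf-labels from conspiring to make $\gamma = 0$.
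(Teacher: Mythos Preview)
Your overall strategy is the paper's: induct, use the recursive identities, force the summands to be multiplicative, invoke Lemma~\ref{lem:mrops}, and then deal with the awkward $\gamma=0$ situation using $\beta\neq 0$. Two comments, one cosmetic and one substantive.

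First, the even/odd split is unnecessary. The paper argues uniformly: assuming $\sum_{m}f_m=\M_n^{\alpha,\beta}$ with $k<n/2$, one shows directly that $\partial_{x_a}\partial_{x_b}(f_m)\neq 0$ for \emph{all} $a,b,m$ (otherwise $\M_{n-2}^{\alpha,\beta}$ would be a sum of $k-1$ ROPs, contradicting the IH on $n-2$). This is exactly your ``iterated rigidity'' in the odd case, but obtained in one line and for all $n$. From it one gets that the least common ancestor of any two leaves in any $\phi_m$ is a $\times$-gate, so each $f_m$ is a \emph{fully} multiplicative ROP, not merely ``multiplicative around its leaves''. Your stated reason (that $\partial_{x_i}$ discards the sibling's variables when the leaf-parent is $+$) is in fact correct, since the ambient ROF is linear in the output of that gate; but the LCA formulation is cleaner and yields the stronger conclusion immediately.

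Second, and this is the genuine gap: you cannot ``ensure $\gamma\neq 0$''. It is entirely possible that for every $m$ and every leaf $x_i$, the value $\gamma$ produced by Lemma~\ref{lem:mrops} is $0$; equivalently, every leaf in every $\phi_m$ computes $a_i x_i$ with no constant term. The paper does not prevent this case --- it derives a separate contradiction from it. Since each $\phi_m$ is multiplicative and the leaf at $x_n$ computes $a_n x_n$, setting $x_n=0$ makes the parent $\times$-gate output a constant, so the sibling subtree's variables disappear from $f_m|_{x_n=0}$. Hence each $f_m|_{x_n=0}$ involves at most $n-2$ variables and (being multilinear) has degree $\le n-2$. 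But $\M_n^{\alpha,\beta}|_{x_n=0}=\beta S_{n-1}^{n-1}$ has degree $n-1$ because $\beta\neq 0$. This degree comparison is precisely where your intuition ``the $\beta S_n^{n-1}$ part must be produced'' cashes out; what is missing from your proposal is this concrete argument replacing the hope that some $\gamma$ is nonzero.
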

\begin{proof}
The proof is by induction on $n$. The cases $n = 1,2$ are easy to
see. We now assume that $k \geq 1$ and $n > 2k$. Assume to the
contrary that there are ROPs $f_{1}, f_{2}, \ldots, f_{k}$ over
$\F[x_{1}, x_{2}, \ldots, x_{n}]$ such that
$\displaystyle\sum_{m \in  [k]} f_{m}
= \M_{n}^{\alpha,\beta}$.
The main steps in the  proof are as follows:
\begin{enumerate}
\item Show using the inductive hypothesis that for all $m \in [k]$ and
  $a,b \in [n]$, $\partial_{x_{a}}\partial_{x_{b}}(f_{m}) \neq 0$.
\item Conclude that for all $m \in [k]$, $f_{m}$ must be a
  multiplicative ROP. That is, the ROF computing $f_{m}$ does not
  contain any addition gate. 
\item Use the multiplicative property of $f_{k}$ to show that $f_{k}$
  can be eliminated by taking partial derivative with respect to one
  variable and substituting another by a field constant. If this
  constant is non-zero, we contradict the inductive hypothesis.
\item Otherwise, use the sum of (multiplicative) ROPs representation of
  $\M_n^{\alpha,\beta}$ to show that the degree of $f$ can be made at
  most $(n-2)$ by setting one of the variables to zero. This
  contradicts our choice of $f$ since $\beta \neq 0$.
\end{enumerate}
We now proceed with the proof.
\begin{claimn} \label{cl:lower-bound}
For all $m \in [k]$ and $a,b \in [n]$,
$\partial_{x_{a}}\partial_{x_{b}}(f_{m}) \neq 0$. 
\end{claimn}
\begin{proof}
  Suppose to the contrary that
  $\partial_{x_{a}}\partial_{x_{b}}(f_{m}) =0$. Assume wlog that
$a=n$, $b=n-1$, $m=k$, so 
  $\partial_{x_{n}}\partial_{x_{n-1}}(f_{k}) = 0$.
Then, 
\begin{align*}
\begin{aligned}
\M_{n}^{\alpha,\beta} &= \displaystyle\sum_{m=0}^{k} f_{m} &&\text{(by assumption)} \\
\partial_{x_{n}}\partial_{x_{n-1}}(\M_{n}^{\alpha,\beta}) &=
\displaystyle\sum_{m=0}^{k} \partial_{x_{n}}\partial_{x_{n-1}} (f_{m})
&&\text{(by subadditivity of partial derivative)} \\ 
\M_{n-2}^{\alpha,\beta} &= \displaystyle\sum_{m=0}^{k-1}
\partial_{x_{n}}\partial_{x_{n-1}} (f_{m}) &&\text{(by recursive
structure of $\M_n$,}\\
&&& \text{and since
  $\partial_{x_{n}}\partial_{x_{n-1}}(f_{k}) = 0$)} 
\end{aligned}
\end{align*}
Thus $\M_{n-2}^{\alpha,\beta}$ can be written as the sum of $k-1$
polynomials, each of which is an ROP
(by Fact~\ref{fact:ROP}).
By the inductive hypothesis, $2(k-1) \geq
(n-2)$. Therefore, $k \geq n/2$ contradicting our assumption. 
\qed \end{proof}

Let $\phi_{m}$ be the ROF computing $f_{m}$. The next step is to show
that for $m \in [k]$, $\phi_{m}$ does not contain an addition
gate. Suppose to the contrary that for some $m \in [k]$ and $p,q
\in [n]$, the least common ancestor of $x_{p}$ and $x_{q}$ in
$\phi_{m}$ is a $+$ gate. It 
follows that $\partial_{x_{p}}\partial_{x_{q}}(f_{m}) = 0$,
contradicting Claim~\ref{cl:lower-bound}. We record this observation below. 

\begin{observation}
  \label{obs:lower-bound}
For all $m \in [k]$, $f_{m}$ is a multiplicative ROP.
\end{observation}
Observation \ref{obs:lower-bound} and Lemma \ref{lem:mrops}
together imply that
for each $m \in [k]$ and $i\in [n]$, there exist $j \neq i
\in [n]$ and $\gamma \in \F$ such that
$\partial_{x_{j}}(f_{k})\mid_{x_{i} = \gamma} = 0$.  There are two
cases to consider.

First, consider the case when for some $m,i$ and the corresponding
$j,\gamma$, it turns out that $\gamma\neq 0$.  Assume without loss of
generality that $m=k$, $i=n-1$, $j=n$, so that
$\partial_{x_{n}}(f_{k})\mid_{x_{n-1} = \gamma} = 0$. (For other
indices the argument is symmetric.) Then
\begin{align*}
\begin{aligned}
\M_{n}^{\alpha,\beta} &= \displaystyle\sum_{i \in [k]} f_{i} &&\text{(by
  assumption)} \\ 
\partial_{x_{n}}(\M_{n}^{\alpha,\beta})\mid_{x_{n-1} = \gamma} &=
\displaystyle\sum_{i \in [k]} \partial_{x_{n}}(f_{i})\mid_{x_{n-1} =
  \gamma} &&\text{(by subadditivity of partial derivative)} \\ 
\M_{n-1}^{\alpha,\beta}\mid_{x_{n-1} = \gamma} &= \displaystyle\sum_{i \in
  [k-1]} \partial_{x_{n}}(f_{i})\mid_{x_{n-1} = \gamma} &&\text{(since
  $\gamma$ is chosen as per
  Lemma \ref{lem:mrops})} \\ 
\M_{n-2}^{\alpha\gamma+\beta,\beta\gamma} &= \displaystyle\sum_{i \in [k-1]}
\partial_{x_{n}}(f_{i})\mid_{x_{n-1} = \gamma} && \text{(recursive
  structure of $\M_n$)}
\end{aligned}
\end{align*}
Therefore, $\M_{n-2}^{\alpha\gamma+\beta,\beta\gamma}$ can be
written as a sum of at most $k-1$ polynomials, each of which is an ROP
(Fact~\ref{fact:ROP}). By the inductive hypothesis, 
$2(k-1) \geq n-2$ implying that $k \geq n/2$ contradicting our
assumption.

(Note: the term $\M_{n-2}^{\alpha\gamma+\beta,\beta\gamma}$ is what
necessitates a stronger induction hypothesis than working with just
$\alpha=0, \beta=1$.) 

It remains to handle the case when for all $m\in [k]$ and $i\in [n]$,
the corresponding value of $\gamma$ to some $x_j$ (as guaranteed by
Lemma~\ref{lem:mrops}) is $0$. Examining
the proof of Lemma~\ref{lem:mrops}, this implies
that each leaf node in any of the ROFs can be made zero only by setting
the corresponding variable to zero. That is, the linear forms at all
leaves are of the form $a_ix_i$. 

Since each $\phi_{m}$ is a multiplicative ROP, setting $x_n = 0$ makes
the variables in the polynomial computed at the sibling of the leaf
node $a_nx_n$ redundant. Hence setting $x_n=0$ reduces the degree of
each $f_m$ by at least 2. That is, $\Deg(f\mid_{x_n=0}) \leq n-2$. But
$f\mid_{x_n=0}$ equals $\M_{n-1}^{\beta,0} = \beta S_{n-1}^{n-1}$,
which has degree $n-1$, a contradiction. 
\qed\end{proof}

The following lemma shows that the above lower bound is indeed optimal.

\begin{lemma} \label{lem:upper-bound}
For any field $\F$ and $A, B \in \F$, the polynomial $f = A
S^{n}_{n} + B S^{n-1}_{n}$ can be written as a  sum of at most $\lceil
n/2 \rceil$ ROPs. 
\end{lemma}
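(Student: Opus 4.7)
The plan is to give an explicit construction rather than an induction, splitting into cases based on the parity of $n$. The key identity underlying everything is that for any disjoint pairing of variables, summing ``pair-sum times remaining-monomial'' reconstructs $S_n^{n-1}$. Concretely, for $n = 2m$ even, I will pair variables as $(x_1,x_2),(x_3,x_4),\ldots,(x_{n-1},x_n)$ and observe that
\[ \sum_{i=1}^{m} (x_{2i-1} + x_{2i}) \prod_{j \notin \{2i-1, 2i\}} x_j = \sum_{k=1}^{n} \prod_{j \neq k} x_j = S_n^{n-1}, \]
since each of the $n$ terms $\prod_{j \neq k} x_j$ appears exactly once on the left. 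Each summand on the left is a product of two variable-disjoint factors (a bivariate linear form and a monomial), hence is an ROP.

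To accommodate the $A S_n^n$ term in the even case, I will absorb it into the first summand: replace $B(x_1+x_2)\prod_{j \geq 3} x_j$ by
\[ \bigl(B(x_1+x_2) + A x_1 x_2\bigr) \prod_{j \geq 3} x_j. \]
The bracketed factor is a bivariate multilinear polynomial (hence an ROP) and the remaining product is on disjoint variables, so the whole expression is still an ROP. The other $m-1$ summands remain unchanged (with coefficient $B$), yielding $\lceil n/2 \rceil = m$ ROPs in total.

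For $n = 2m+1$ odd, I will peel off the variable $x_n$ via the splitting
\[ S_n^{n-1} = \prod_{j=1}^{n-1} x_j \; + \; x_n \cdot S_{n-1}^{n-2}(x_1,\ldots,x_{n-1}), \]
which is immediate by separating the unique term of $S_n^{n-1}$ that excludes $x_n$ from those that include $x_n$. Applying the even-case construction to $S_{n-1}^{n-2}$ (on $n-1$ even variables) expresses it as a sum of $m$ ROPs; multiplying each by the disjoint variable $x_n$ keeps them ROPs. Finally, I collapse $\prod_{j=1}^{n-1} x_j$ together with $A S_n^n = A x_n \prod_{j=1}^{n-1} x_j$ into the single ROP $(B + A x_n) \prod_{j=1}^{n-1} x_j$, giving $m + 1 = \lceil n/2 \rceil$ ROPs in total.

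There is no serious obstacle here: the construction is fully explicit and the only verifications needed are the two combinatorial identities above, the fact that a product of two variable-disjoint ROPs is an ROP, and the fact that every bivariate multilinear polynomial is an ROP (already noted in the paper). I would present the even case first, then derive the odd case from it.
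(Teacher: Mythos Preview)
Your proposal is correct and follows essentially the same approach as the paper: the paper uses the identical pairing construction $f_i = (x_{2i-1}+x_{2i})\prod_{j\neq 2i-1,2i} x_j$, absorbs the $AS_n^n$ term into one designated summand in the even case (the last pair rather than your first, a cosmetic difference), and in the odd case writes $f = B(f_1+\cdots+f_k) + x_1\cdots x_{2k}(B+Ax_{2k+1})$, which is exactly what your reduction-to-even-case unwinds to.
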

\begin{proof}
Define $f_{i} := (x_{2i-1} + x_{2i}) \cdot
\left(\displaystyle\prod_{\substack{ k \in [n] \\ k \neq 2i, 2i-1 }}
x_{k} \right)$. Notice that each $f_{i}$ is a ROP. Depending on the
parity of $n$, we consider two cases: \\ 
\textbf{Case 1:} $n$ is even; $n = 2k$. Then,
defining
$$f'_k = \left( Bx_{2k-1} + Bx_{2k} + A x_{2k-1} x_{2k} \right) \cdot
\left(\displaystyle\prod_{\substack{ m \in [n] \\ k \neq 2k, 2k-1 }}
x_{m} \right), $$ we have 
$ f = B (f_{1} + f_{2} + \ldots + f_{k-1}) + f'_k$.
Note that $f'_k$ is also an ROP; the factor involving $x_{2k-1}$ and
$x_{2k}$ is bivariate multilinear and hence an ROP.\\
\textbf{Case 2:} $n$ is odd; $n = 2k+1$. Then, defining
$$f'_{k+1} = x_{1}x_{2}\cdots x_{2k} (B + Ax_{2k+1}),$$ we have 
$ f = B(f_{1} + f_{2} + \ldots + f_{k}) + f'_{k+1}$. Note that
$f'_{k+1}$ is also an ROP.\\
In either case, since all the polynomials $f_{i}$, $f'_k$, $f'_{k+1}$
are ROPs, we have a representation 
of $f$ as a sum of at most $\lceil n/2 \rceil$ ROPs. 
\qed \end{proof}

Combining the results of Lemma~\ref{lem:lower-bound} and
Lemma~\ref{lem:upper-bound}, we obtain the following theorem. 
At $\alpha=0, \beta=1$, it yields Theorem~\ref{thm:hierarchy}.
\begin{theorem}
\label{thm:hierarchy-general}
For each $n \ge 1$, any
$\alpha \in \F$ and any any $\beta \in \F \setminus \{0\}$, the
polynomial $\alpha S_{n}^{n} + \beta S_{n}^{n-1}$ is in $\sum^m \cdot
\textrm{ROP}$ but not in $\sum^{m-1} \cdot \textrm{ROP}$, where $m=
\lceil n/2 \rceil$. 
\end{theorem}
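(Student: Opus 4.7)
The theorem is essentially a packaging of the two preceding lemmas, so my plan is to assemble them rather than argue from scratch. First, I would invoke Lemma~\ref{lem:upper-bound} directly: it says that for \emph{any} $A,B \in \F$, the polynomial $A\,S_n^n + B\,S_n^{n-1}$ admits a sum-of-at-most-$\lceil n/2 \rceil$ ROP representation. Applying this with $A = \alpha$ and $B = \beta$ gives that $\alpha S_n^n + \beta S_n^{n-1} \in \sum^m \cdot \textrm{ROP}$ for $m = \lceil n/2 \rceil$, which settles the upper-bound half of the statement.

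Next, I would invoke Lemma~\ref{lem:lower-bound}, whose hypothesis is exactly $\beta \neq 0$ (matching $\beta \in \F \setminus \{0\}$ in the theorem). That lemma shows that $\M_n^{\alpha,\beta}$ cannot be written as a sum of $k$ ROPs whenever $k < n/2$. To convert this to the statement $\alpha S_n^n + \beta S_n^{n-1} \notin \sum^{m-1} \cdot \textrm{ROP}$, I just need to check that $m-1 < n/2$ in both parities. If $n = 2t$ is even, then $m-1 = t-1 < t = n/2$; if $n = 2t+1$ is odd, then $m-1 = t < t + \tfrac{1}{2} = n/2$. So in either case the value $k = m-1$ satisfies $k < n/2$, and Lemma~\ref{lem:lower-bound} rules out membership in $\sum^{m-1} \cdot \textrm{ROP}$.

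Combining the two bullets yields the theorem. There is essentially no obstacle here beyond bookkeeping, since the real content sits in the two supporting lemmas; the only thing that requires a moment's care is the parity check that turns the strict inequality $k < n/2$ in Lemma~\ref{lem:lower-bound} into the tight bound $m-1$ with $m = \lceil n/2 \rceil$. Finally, specializing to $\alpha = 0$, $\beta = 1$ recovers Theorem~\ref{thm:hierarchy} as a corollary, exactly as the text claims.
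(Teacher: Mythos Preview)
Your proposal is correct and matches the paper's approach exactly: the paper simply states that Theorem~\ref{thm:hierarchy-general} follows by combining Lemma~\ref{lem:lower-bound} and Lemma~\ref{lem:upper-bound}, and notes the specialization $\alpha=0,\beta=1$ for Theorem~\ref{thm:hierarchy}. Your added parity check that $m-1 < n/2$ is a small detail the paper leaves implicit, but it is correct and harmless to include.
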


\section{A 4-variate multilinear polynomial not in $\sum^2\cdot \textrm{ROP}$}
\label{sec:hard-4-variate}
This section is devoted to proving Theorem~\ref{thm:hard-for-2-sum}.
We want to find an explicit 4-variate multilinear polynomial that is
not expressible as the sum of 2 ROPs.

Note that the proof of Theorem~\ref{thm:hierarchy} does not help here,
since the polynomials separating $\sum^2\cdot \textrm{ROP}$ from
$\sum^3\cdot \textrm{ROP}$ have 5 or 6 variables. One obvious approach
is to consider other combinations of the symmetric polynomials. This
fails too; we can show that all such combinations are in $\sum^2 \cdot
\ROP$. 
\begin{proposition}
\label{prop:4-variate-sympoly-combo}
For every choice of field constants $a_i$ for each
$i \in \{0,1,2,3,4\}$, the polynomial $\sum_{i=0}^4 a_i S_4^i$ can be
expressed as the sum of two ROPs.
\end{proposition}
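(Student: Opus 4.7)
The plan is to exhibit $f = \phi_1 + \phi_2$ explicitly via a short case analysis on the top coefficient $a_4$, exploiting the Hankel-type structure that appears when the variables are paired as $\{x_1,x_2\}$ and $\{x_3,x_4\}$. Writing $u = x_1+x_2$, $v = x_3+x_4$, $p = x_1x_2$, $q = x_3x_4$, the symmetric polynomial takes the quadratic-in-$(p,q)$ form
$$f \;=\; \bigl[a_0 + a_1(u+v) + a_2 uv\bigr] + p(a_2 + a_3 v) + q(a_2 + a_3 u) + a_4\, pq,$$
whose coefficient matrix $M_{ij} = a_{i+j}$ is the $3 \times 3$ Hankel matrix that governs the whole analysis.

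In the main case $a_4 \neq 0$, I would take
$$\phi_2 \;=\; \tfrac{1}{a_4}\bigl(a_4 x_1 x_2 + a_3(x_1+x_2) + a_2\bigr)\bigl(a_4 x_3 x_4 + a_3(x_3+x_4) + a_2\bigr),$$
a product of two bivariate multilinear polynomials, hence an ROP. A direct coefficient-by-coefficient check shows
$$\phi_1 \;:=\; f - \phi_2 \;=\; K_0 + K_1\,(x_1+x_2+x_3+x_4) + K_2\,(x_1+x_2)(x_3+x_4),$$
with $K_0 = a_0 - a_2^2/a_4$, $K_1 = a_1 - a_2 a_3/a_4$, $K_2 = a_2 - a_3^2/a_4$. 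This residual is always an ROP: when $K_2 \neq 0$, completing the square in the linear forms $x_1+x_2$ and $x_3+x_4$ expresses $\phi_1$ as $K_2(x_1+x_2+K_1/K_2)(x_3+x_4+K_1/K_2) + (K_0 - K_1^2/K_2)$; when $K_2 = 0$, $\phi_1 = K_0 + K_1\, S_4^1$ is affine and trivially an ROP.

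When $a_4 = 0$ I use a shifted analog. If $a_3 \neq 0$, take $\phi_2 = (x_3+x_4 + a_2/a_3)\bigl(a_3 x_1 x_2 + a_2(x_1+x_2) + a_1\bigr)$; the residual $\phi_1 = f - \phi_2$ regroups (again by completing the square, now in $(x_1+x_2)$) into $(x_1+x_2 + a_2/a_3)\bigl(a_3 x_3 x_4 + a_1 - a_2^2/a_3\bigr) + \text{const}$, another ROP. When $a_4 = a_3 = 0$, the polynomial has degree $\leq 2$ and the direct split $f = a_2(x_1 x_2 + x_3 x_4) + \bigl[a_0 + a_1 S_4^1 + a_2(x_1+x_2)(x_3+x_4)\bigr]$ already works: the first summand is an ROP and the second bracket becomes the ROP $a_2(x_1+x_2+a_1/a_2)(x_3+x_4+a_1/a_2) + (a_0 - a_1^2/a_2)$ by the same completing-the-square trick when $a_2 \neq 0$ (or is merely affine when $a_2 = 0$).

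The main obstacle is the apparent division by $a_4$ (or $a_3$) in the closed-form $\phi_2$; the case split above cleanly circumvents it. Within each case, everything reduces to a short symbolic expansion followed by one ``completing the square in the linear forms $x_1+x_2,\, x_3+x_4$'' step, and the underlying structural reason the construction succeeds is that after subtracting the ``rank-one'' product $\phi_2$, the residual is forced to live in the low-dimensional subspace spanned by $1$, $S_4^1$, and $(x_1+x_2)(x_3+x_4)$ in the main case (or an analogous shifted subspace in the degenerate case), each element of which is visibly an ROP.
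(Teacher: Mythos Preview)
Your proof is correct and follows essentially the same strategy as the paper: pair the variables as $\{x_1,x_2\}$ and $\{x_3,x_4\}$, subtract a product of two variable-disjoint bivariate multilinear polynomials, and show the residual is an ROP by completing the square. The only difference is organizational: the paper pivots its case analysis on $a_2$ (and then on whether $a_2a_4=a_3^2$), using $\phi_2=\tfrac{1}{a_2}(a_1+a_2(x_1+x_2)+a_3x_1x_2)(a_1+a_2(x_3+x_4)+a_3x_3x_4)$ in the main case, whereas you pivot on $a_4$ (and then $a_3$); your Hankel-matrix framing makes the ``rank-one subtraction'' idea common to both arguments explicit.
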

\begin{proof}
Let $g = \sum_i a_{i} S_4^i$. 
We obtain the expression for $g$ in different ways in 4 different
cases. 
\[
\begin{array}{|l@{\hspace{5mm}}|@{\hspace{5mm}}rl|}
\hline 
\textrm{Case} & \multicolumn{2}{c|}{\textrm{Expression} }
\\ \hline 
a_2=a_3 = 0 & 
g = & a_0 + a_1 S_4^1 + a_4 S_4^4 
\\ \hline 
a_2=0; & 
g = & \left(a_1 + a_3 x_1 x_2) (x_3 + x_4 + \frac{a_4}{a_3}x_3x_4)\right)
\\
a_3 \neq 0 && + \left((a_1 + a_3 x_3x_4) (x_1 + x_2 - \frac{a_1a_4}{a_3^2})\right) +
c \\ \hline 
a_2\neq0; & 
a_2g = & (a_1 + a_2(x_1 + x_2) + a_3 x_1 x_2) (a_1 + a_2(x_3 + x_4) +
a_3 x_3 x_4)  \\
a_2a_4 = a_3^2 & & + \left(a_2^2 - a_1a_3) (x_1x_2 + x_3x_4)\right) + c \\ \hline 
a_2\neq 0; & 
a_2 g = & \left(a_1 + a_2(x_1 + x_2) + a_3 x_1 x_2) (a_1 + a_2(x_3 +
x_4) + a_3 x_3 x_4)\right)  \\ 
a_2a_4 \neq a_3^2  & & + \left(x_1x_2 + \frac{a_2^2 -
a_1a_3}{a_2a_4 - a_3^2}) ((a_2a_4 - a_3^2) x_3x_4 + a_2^2 - a_1a_3) +
c \right) \\ \hline 
\end{array}
\]
In the above, $c$ is an appropriate field constant, and can be added
to any ROP. Notice that the first expression is a sum of two ROPs
since it is the sum of a linear polynomial and a single monomial. All
the other expressions have two summands, each of which is a product of
variable-disjoint bivariate polynomials (ignoring constant
terms). Since every bivariate polynomial is a ROP, these
representations are also sums of $2$ ROPs.
\qed\end{proof}

Instead, we define a polynomial that gives carefully chosen weights to
the monomials of $S_4^2$. Let $f^{\alpha,\beta,\gamma}$ denote the
following polynomial:
\[f^{\alpha,\beta,\gamma} = \alpha \cdot (x_{1}x_{2} + x_{3}x_{4}) + \beta \cdot
(x_{1}x_{3} + x_{2}x_{4}) + \gamma \cdot (x_{1}x_{4} +
x_{2}x_{3}).\] To keep notation simple, we will omit the superscript
when it is clear from the context. In the theorem below, we
obtain necessary and sufficient conditions on $\alpha,\beta,\gamma$
under which $f$ can be expressed as a sum of two ROPs. 

\begin{theorem}[Hardness of representation for sum of
  $2$ ROPs] \label{thm:hard-4-variate} Let $f$ be the polynomial
  $f^{\alpha,\beta,\gamma} = \alpha \cdot (x_{1}x_{2} + x_{3}x_{4})
  + \beta
\cdot (x_{1}x_{3} + x_{2}x_{4}) + \gamma \cdot (x_{1}x_{4} +
x_{2}x_{3})$. The following are equivalent:
\begin{enumerate}
\item $f$ is not expressible as the sum of two ROPs.
\item $\alpha, \beta, \gamma$ satisfy all the three conditions C1, C2,
  C3 listed below.
  \begin{description}
    \item[C1:] $\alpha\beta\gamma \neq 0$. 
    \item[C2:] $(\alpha^2-\beta^2)(\beta^2 -
      \gamma^2)(\gamma^2-\alpha^2) \neq 0$. 
    \item[C3:] None of the equations $X^2 - D_i = 0$, $i\in [3]$,
  has a root in $\F$, where
      \begin{eqnarray*}
        D_1 &=& (+\alpha^2 - \beta^2 - \gamma^2)^2 - (2\beta\gamma)^2 \\
        D_2 &=& (-\alpha^2 + \beta^2 - \gamma^2)^2 - (2\alpha\gamma)^2 \\
        D_3 &=& (-\alpha^2 - \beta^2 + \gamma^2)^2 - (2\alpha\beta)^2 \\
      \end{eqnarray*}
  \end{description}
\end{enumerate}
 \end{theorem}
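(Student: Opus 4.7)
The statement is an equivalence, and the plan is to prove both directions by combining the structural lemma for sum of two ROPs (Lemma~\ref{lem:structure-2-sum-4-variate}) with the commutator identity (Lemma~\ref{lem:commutator}). The hard direction asserts that if $f\in\sum^{2}\cdot\ROP$, then one of C1, C2, C3 must fail; it follows by examining the three cases of the structural lemma. The easy direction asserts that failure of any one of C1, C2, C3 yields $f\in\sum^{2}\cdot\ROP$, which I would establish by explicit construction in each subcase.

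I would first handle the easy direction. If C1 fails, some coefficient vanishes; e.g.\ when $\alpha=0$ the decomposition $f = x_1(\beta x_3+\gamma x_4) + x_2(\gamma x_3+\beta x_4)$ is immediately a sum of two ROPs. If C2 fails, one of $\alpha^2=\beta^2$, $\beta^2=\gamma^2$, $\alpha^2=\gamma^2$ holds; e.g.\ when $\alpha=\beta$, the identity $\alpha(x_1x_2+x_1x_3+x_2x_4+x_3x_4) = \alpha(x_1+x_4)(x_2+x_3)$ yields $f = \alpha(x_1+x_4)(x_2+x_3) + \gamma(x_1x_4+x_2x_3)$. If C3 fails, say $D_1$ is a square in $\F$, the binary quadratic $-\beta\gamma\,u^2+(\alpha^2-\beta^2-\gamma^2)\,uv-\beta\gamma\,v^2$ splits into linear factors in $(x_3,x_4)$, and using one of those factors as $l_2$ I would solve for $l_1,l_3,l_4$ by matching coefficients in $f$, arriving at $f = l_1l_2+l_3l_4$ with variable-disjoint pairs. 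The subcases with $D_2$ or $D_3$ a square are analogous, corresponding to the other two bipartitions of $\{x_1,\ldots,x_4\}$ into pairs of pairs.

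For the hard direction, suppose $f\in\sum^{2}\cdot\ROP$ and apply the structural lemma. In Case~1 (some 2-variate restriction is linear), freezing any two of the four variables produces a bivariate quadratic in the remaining two whose unique degree-$2$ coefficient is precisely one of $\alpha,\beta,\gamma$; linearity forces that coefficient to vanish, violating C1. In Case~2 (affine dependence among $x_i,x_j,\partial_{x_i}f,\partial_{x_j}f$), the partial derivatives are homogeneous linear forms such as $\partial_{x_1}f = \alpha x_2+\beta x_3+\gamma x_4$; writing out the affine relation and reading off the coefficients of the two variables outside $\{x_i,x_j\}$ yields a $2\times 2$ homogeneous system whose nontrivial solvability reduces, after a short computation, to one of $\alpha^2=\beta^2$, $\beta^2=\gamma^2$, $\alpha^2=\gamma^2$, violating C2.

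In Case~3 ($f=l_1l_2+l_3l_4$ with variable-disjoint pairs), I would first observe that if the two pairs induce the \emph{same} bipartition of $\{x_1,\ldots,x_4\}$, the expansion has no within-part quadratic, forcing $\alpha=0$ and thus violating C1. So assume the two pairs induce distinct bipartitions, say $(l_1,l_2)$ uses the split $\{x_1,x_2\}\mid\{x_3,x_4\}$. Then Lemma~\ref{lem:commutator} implies that $l_2$ divides $\triangle_{12}f$, and a direct expansion gives $\triangle_{12}f = -\beta\gamma\,x_3^2+(\alpha^2-\beta^2-\gamma^2)\,x_3x_4-\beta\gamma\,x_4^2$, whose discriminant is exactly $D_1$; for $l_2$ to exist over $\F$, $D_1$ must be a square in $\F$. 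The other two bipartitions yield the analogous quadratics with discriminants $D_2$ and $D_3$. Hence Case~3 forces some $D_i$ to be a square, violating C3. The main obstacle will be executing the commutator expansion cleanly and, on the easy side, reconstructing all four linear forms $l_1,l_3,l_4$ from the single factor $l_2$ extracted from $\triangle_{12}f$ via coefficient matching; once those bookkeeping-heavy steps are handled, the three conditions fall out by routine linear algebra.
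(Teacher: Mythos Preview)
Your proposal is correct and follows essentially the same approach as the paper: the easy direction is handled by explicit constructions in each subcase, and the hard direction applies the structural lemma (Lemma~\ref{lem:structure-2-sum-4-variate}) and rules out $C1'$, $C2'$, $C3'$ using, respectively, the presence of all degree-$2$ monomials, a linear-dependence computation on the partial derivatives, and the commutator identity (Lemma~\ref{lem:commutator}) together with the discriminant computation. The only point you glossed over that the paper makes explicit is that in Case~3 both bipartitions must be $2+2$ (otherwise some quadratic monomial is missing, again violating C1), but this is a minor detail in a proof sketch.
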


\begin{remark}
  \begin{enumerate} \item It follows, for instance, that $2(x_{1}x_{2}
    + x_{3}x_{4}) + 4(x_{1}x_{3} + x_{2}x_{4}) + 5(x_{1}x_{4} +
    x_{2}x_{3})$ cannot be written as a sum of $2$ ROPs over reals,
    yielding Theorem~\ref{thm:hard-for-2-sum}.

   \item If $\F$ is an algebraically closed field, then for every
      $\alpha,\beta, \gamma$, condition C3 fails, and so every
      $f^{\alpha,\beta,\gamma}$ can be written as a sum of 2 ROPs.
      However we do not know if there are other examples, or whether
      all multilinear 4-variate polynomials are expressible as the sum
      of two ROPs.

  \item Even if $\F$ is not algebraically closed, condition C3 fails
      if for each $a \in \F$, the equation $X^2=a$ has a root.
  \end{enumerate}
\end{remark}

To prove Theorem~\ref{thm:hard-4-variate}, we first consider the
easier direction, $1 \Rightarrow 2$, and prove the contrapositive.
\begin{lemma}
\label{lem:4-var-upper-bound}
If $\alpha, \beta, \gamma$ do not satisfy all of C1,C2,C3, then the
polynomial $f$ can be written as a sum of 2 ROPs.
\end{lemma}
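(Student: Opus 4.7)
The plan is to prove the contrapositive by a case analysis on which of C1, C2, C3 fails, exhibiting an explicit sum-of-two-ROP decomposition of $f^{\alpha,\beta,\gamma}$ in each case.

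The two easy cases are when C1 or C2 fails. Using the symmetry that permuting $(\alpha,\beta,\gamma)$ corresponds to a relabelling of $(x_2,x_3,x_4)$, in the C1-fails case I may assume $\gamma = 0$, and then
\[
f = x_1(\alpha x_2 + \beta x_3) + x_4(\alpha x_3 + \beta x_2),
\]
each summand being the product of a single variable with a bivariate linear form, hence an ROP. In the C2-fails case, I may assume $\beta = \eta\alpha$ with $\eta \in \{+1,-1\}$, and write
\[
f = \alpha\,(x_1 + \eta x_4)(x_2 + \eta x_3) + \gamma\,(x_1 x_4 + x_2 x_3);
\]
the first summand is a product of variable-disjoint linear forms, and the second has the ROF representation with a $+$-root and two $\times$-children on disjoint variable pairs, so both summands are ROPs.

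The substantive case is when C1 and C2 hold but C3 fails. Here I would try the ansatz
\[
f = (x_1 + r\,x_2)(\beta x_3 + d\,x_4) + (x_1 + s\,x_3)(\alpha x_2 + t\,x_4),
\]
in which the two products use the two distinct 2-2 partitions $\{1,2\}\cup\{3,4\}$ and $\{1,3\}\cup\{2,4\}$. Matching the six cross-term coefficients against those of $f$ forces $r = \beta/d$, $t = \gamma - d$, $s = \alpha/(\gamma - d)$, and collapses the system to the single constraint
\[
\gamma d^2 - (\beta^2 + \gamma^2 - \alpha^2)\,d + \gamma \beta^2 = 0
\]
on $d$. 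A direct calculation shows the discriminant of this quadratic is precisely $D_1$. A crucial sub-step is the identity $D_1 = D_2 = D_3 = \prod_{\epsilon_1, \epsilon_2 \in \{\pm 1\}}(\alpha + \epsilon_1 \beta + \epsilon_2 \gamma)$, obtained by writing each $D_i$ as a difference of two squares and factoring. Consequently, C3 failing (i.e., some $D_i$ being a square in $\F$) implies $D_1$ is a square, so the quadratic has a root $d \in \F$, which exists because $\gamma \neq 0$ by C1. One then verifies, again using C1, that $d \neq 0$ and $d \neq \gamma$ (otherwise substitution into the quadratic would force $\gamma\beta^2 = 0$ or $\gamma\alpha^2 = 0$), so the four linear forms are well-defined; a final expansion checks that the decomposition recovers $f$, and each product is an ROP since its two factors are bivariate linear forms on disjoint variables.

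The main obstacle is arriving at the correct ansatz in the third case; once the two pair-partitions are fixed, matching coefficients produces a unique single-parameter family whose solvability condition is exactly C3. The identity $D_1 = D_2 = D_3$ also explains why the statement of C3 is symmetric in $(\alpha, \beta, \gamma)$: each of the three possible pairs of distinct 2-2 partitions yields the same algebraic condition.
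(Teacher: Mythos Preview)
Your proof is correct and follows essentially the same route as the paper: handle the C1-false and C2-false cases by direct decompositions, and in the C3-false case write $f$ as a sum of two products of variable-disjoint linear forms across two different $2{+}2$ partitions, reducing to a quadratic whose discriminant is $D_1$. The paper's ansatz is the cosmetically different $\alpha(x_1 - Ax_3)(x_2 - Bx_4) + \beta(x_1 - Cx_2)(x_3 - Dx_4)$, leading to a monic quadratic in $C$ rather than your quadratic in $d$, but the arguments are isomorphic. One point where you actually improve on the paper: the identity $D_1 = D_2 = D_3 = \prod_{\epsilon_1,\epsilon_2 \in \{\pm 1\}}(\alpha + \epsilon_1\beta + \epsilon_2\gamma)$ is not stated there; the paper instead writes ``without loss of generality $X^2 - D_1 = 0$ has a root'' and implicitly relies on the variable-relabeling symmetry permuting $(\alpha,\beta,\gamma)$, whereas your identity shows directly that C3 is a single condition.
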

\begin{proof}
\noindent \textbf{C1 false:} If any of $\alpha, \beta, \gamma$ is
zero, then by definition $f$ is the the sum of at most two ROPs.

\noindent \textbf{C2 false:} Without loss of generality,
assume $\alpha^2 = \beta^2$, so $\alpha= \pm \beta$. Then $f$ is computed
by $f = \alpha \cdot (x_{1} \pm x_{4})(x_2 \pm x_3) + \gamma \cdot
(x_{1}x_{4} + x_{2}x_{3})$.

\noindent \textbf{C1 true; C3 false:} Without loss of generality,
the equation $X^2 - D_1 = 0$ has a root $\tau$. 
We try to express $f$ as
\[
\alpha (x_1 - A x_3) ( x_2 - B x_4) + 
\beta (x_1 - C x_2) (x_3 - D x_4).
\] 
The coefficients for $x_3x_4$ and $x_2x_4$ force
$AB=1$, $CD=1$, giving the form
\[
\alpha (x_1 - A x_3) ( x_2 - \frac{1}{A} x_4) + 
\beta (x_1 - C x_2) (x_3 - \frac{1}{C} x_4).
\] 
Comparing the coefficients for $x_1x_4$ and $x_2x_3$, we 
obtain the constraints 
\[
-\frac{\alpha}{A} - \frac{\beta}{C} = \gamma; ~~~~~~ - \alpha A
 - \beta C = \gamma
 \]
Expressing $A$ as $\frac{-\gamma - \beta C}{\alpha}$, we get a
quadratic constraint on $C$; it must be a  root of the equation
\[Z^2 + \frac{-\alpha^2+\beta^2+\gamma^2}{\beta\gamma}Z + 1 = 0. \]
Using the fact that $\tau^2 = D_1 = (-\alpha^2+\beta^2+\gamma^2)^2 -
(2\beta\gamma)^2$, we see that indeed this equation does have
roots. The left-hand size splits into linear factors, giving
\[(Z-\delta)(Z-\frac{1}{\delta}) = 0 \textrm{~~where~~}
%-\left(\delta + \frac{1}{\delta}\right)
%= \frac{-\alpha^2+\beta^2+\gamma^2}{\beta\gamma}.
\delta = \frac{\alpha^2-\beta^2-\gamma^2 + \tau}{2\beta\gamma}.
\] 
It is easy to verify that $\delta \neq 0$ and $\delta \neq
-\frac{\gamma}{\beta}$ (since $\alpha \neq 0$). Further, define $\mu
= \frac{-(\gamma+\beta\delta)}{\alpha}$. Then $\mu$ is well-defined
(because $\alpha\neq 0$) and is also non-zero. Now setting $C=\delta$
and $A=\mu$, we have satisfied all the constraints and so we can write
$f$ as the sum of 2 ROPs as follows:
\[
f = \alpha (x_1 - \mu x_3) ( x_2 - \frac{1}{\mu} x_4) + 
\beta (x_1 - \delta x_2) (x_3 - \frac{1}{\delta} x_4).
\] 
\qed\end{proof}

Now we consider the harder direction: $2 \Rightarrow 1$. Again, we
consider the contrapositive.  We first show
(Lemma~\ref{lem:structure-2-sum-4-variate}) a structural property
satisfied by every polynomial in $\sum^2\cdot \ROP$: it must satisfy
at least one of the three properties $C1', C2', C3'$ described in the
lemma. We then show (Lemma~\ref{lem:no-structure}) that under the
conditions $C1,C2,C3$ from the theorem statement, $f$ does not satisfy any of
$C1', C2', C3'$; it follows that $f$ is not expressible as the sum of
2 ROPs.

\begin{lemma}
  \label{lem:structure-2-sum-4-variate}
Let $g$ be a $4$-variate multilinear polynomial over the field $\F$
which can be expressed as a sum of $2$ ROPs. Then at least one of the
following conditions is true: 
\begin{description}
\item[C1':]  There exist $i, j \in [4]$ and $A, B \in \F$ such that
  $g\mid_{x_{i}= A, x_{j}= B}$ is linear.
\item[C2':] There exist $i, j \in [4]$ such that $x_{i}, x_{j},
  \partial_{x_{i}}(g), \partial_{x_{j}}(g)$ are affinely dependent. 
\item[C3':] $g = l_{1} \cdot l_{2} + l_{3} \cdot l_{4}$ where $l_{i}$s are
  variable disjoint linear forms. 
\end{description}
\end{lemma}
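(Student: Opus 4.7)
The plan is a structural case analysis on the two ROFs $\phi_1, \phi_2$ computing $f_1, f_2$ in the hypothesized decomposition $g = f_1 + f_2$, showing that one of C1', C2', C3' is forced in each case.

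First, if some $f_i$, say $f_1$, uses at most three variables, I would apply Proposition~\ref{prop:3-var-ROP} to find $x_i$ and $A \in \F$ with $\deg(f_1|_{x_i=A}) \le 1$; the polynomial $f_2|_{x_i=A}$ is then a ROP in at most three variables, so a second application gives $x_j, B$ with $\deg(f_2|_{x_i=A,x_j=B}) \le 1$. Adding these two degree-at-most-one restrictions yields C1'. This same idea also disposes of the cases where some $\phi_i$ has a $1{+}3$ split at its top: if $f_i = L(x_p) \cdot R(x_q,x_r,x_s) + c$ (product), one substitution $x_p = -L(0)/L'(x_p)$ kills the entire factor $L$ and reduces $g$ to a 3-variate sum of ROPs; if $f_i = (ax_p+b) + M(x_q,x_r,x_s)$ (sum), setting $x_p = 0$ reduces $g$ to a 3-variate sum of two ROPs. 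In both cases the reduced $g$ falls under the three-variable analysis above, giving C1'.

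Now assume both $f_1, f_2$ are 4-variate and neither has a $1{+}3$ split. For the $2{+}2$ sum case, $f_i = P(x_a,x_b) + Q(x_c,x_d)$, the cross partial $\partial_{x_b}\partial_{x_d}(f_i) = 0$, so by Fact~\ref{fact:ROP} the polynomial $\partial_{x_b}\partial_{x_d}(g) = \partial_{x_b}\partial_{x_d}(f_{3-i})$ is a bivariate ROP in $x_a, x_c$. If this ROP is not a nonzero constant, it has a root $(A,B)$ in $\F$; then $g|_{x_a=A,x_c=B}$ is linear in $x_b, x_d$, giving C1'. In the remaining sub-case, $\partial_{x_b}\partial_{x_d}(g)$ is a nonzero scalar $\lambda$, which forces $\partial_{x_b}(g) - \lambda x_d$ and $\partial_{x_d}(g) - \lambda x_b$ to be polynomials in only $x_a, x_c$; combined with the $2{+}2$ structure of $f_i$, this produces an affine dependence among $x_b, x_d, \partial_{x_b}(g), \partial_{x_d}(g)$, giving C2'.

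The main obstacle is the case where both $f_1, f_2$ are $2{+}2$ products at their tops: $f_1 = \tilde P(x_a,x_b)\tilde Q(x_c,x_d) + c_1$ and $f_2 = \tilde R \tilde S + c_2$ with bivariate factors supported on disjoint pairs. If every such bivariate factor is a product of two univariate affine forms, then $g$ is directly of the form $l_1 l_2 + l_3 l_4 + \text{const}$, giving C3' after absorbing the constant. Otherwise some $\tilde P$ has an indecomposable quadratic piece, and Lemma~\ref{lem:commutator} does the heavy lifting: choosing commutator indices $(i,j)$ that straddle the two $2{+}2$ partitions (or match them, if the partitions coincide), the commutator $\triangle_{ij}(g)$ is divisible by a linear form $l$ that must equal one of the conjectured $l_1, l_2, l_3, l_4$. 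Extracting $l$ from an irreducible factor of $\triangle_{ij}(g)$ and then solving $g - l \cdot l' = l''\cdot l'''$ by coefficient comparison determines the remaining three linear forms; consistency of this determination rules out the indecomposable-quadratic scenario and yields C3'. The hard part is pinning down the correct irreducible factor of $\triangle_{ij}(g)$ (the right variable support and the right partition assignment) and verifying that the reconstructed linear forms actually recompose $g$ as a sum of two variable-disjoint products, which is exactly the bookkeeping that Lemma~\ref{lem:commutator} is designed to support.
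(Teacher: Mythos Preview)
Your overall plan (case-split on the top gates of the two ROFs) matches the paper's, but two of your cases do not go through, and the paper's actual arguments there are quite different.

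In your $2{+}2$ sum case with $\partial_{x_b}\partial_{x_d}(g)=\lambda\neq 0$, the claimed affine dependence among $x_b,x_d,\partial_{x_b}(g),\partial_{x_d}(g)$ need not hold. Take $f_1=x_1x_2+x_3x_4$ and $f_2=3x_1x_3+2x_2x_4$; both are ROPs with $2{+}2$ sums at the top, and $g=x_1x_2+3x_1x_3+2x_2x_4+x_3x_4$ has $\partial_{x_2}\partial_{x_4}(g)=2$. A direct check shows that for \emph{every} pair $i,j\in[4]$ the four polynomials $x_i,x_j,\partial_{x_i}(g),\partial_{x_j}(g)$ are affinely independent, so C2' fails entirely; this $g$ satisfies only C1' (set $x_1,x_4$). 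The paper never attempts to extract C2' from cross-partition indices. Instead, when at least one top gate is $+$, it almost always obtains C1' by fixing two variables that hit different summands; C2' is invoked only in the single sub-case $g=M_1(x_1,x_2)+M_2(x_3,x_4)+M_3(x_1,x_2)\cdot M_4(x_3,x_4)$ with $M_3,M_4$ linear and $M_1,M_2$ not, and there the dependence is among $x_1,x_2,\partial_{x_1}(g),\partial_{x_2}(g)$ (same side of the partition), using that $B_3\,\partial_{x_1}(g)-A_3\,\partial_{x_2}(g)$ eliminates $M_4$ and leaves something linear in $x_1,x_2$.

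Your ``main obstacle'' case is also misanalysed. If each bivariate factor of $f_1=\tilde P\tilde Q+c_1$ is a product of two univariate affine forms, then $f_1$ is a product of \emph{four} univariate linear forms plus a constant, hence of degree~$4$; this is not of the shape $l_1l_2$ required by C3'. Moreover, when some $\tilde P$ is genuinely quadratic, C3' can simply fail (e.g.\ $g=(x_1x_2+1)(x_3+x_4)+(x_1+x_3)(x_2+x_4)$ has degree~$3$), so the commutator cannot be used to ``reconstruct'' linear forms that do not exist. The paper's dichotomy in the both-$\times$ case is different and elementary: if all four $M_i$ are \emph{linear forms} then C3' holds by definition; if some $M_i$ has a $x_ax_b$ term, write $M_i=l(x_a)m(x_b)+c$, set $l=0$ to collapse $f_1$ to something affine in the remaining two variables, and one further substitution into the now $3$-variate $f_2$ (via Proposition~\ref{prop:3-var-ROP}) gives C1'. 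Lemma~\ref{lem:commutator} is not used in this lemma at all; it appears only later, in Lemma~\ref{lem:no-structure}, to show that the specific polynomial $f^{\alpha,\beta,\gamma}$ violates C3'. (A smaller slip: in your $1{+}3$ sum case, setting $x_p=0$ leaves a $3$-variate sum of \emph{two} ROPs, which may need two further substitutions, exceeding the budget of C1'; apply Proposition~\ref{prop:3-var-ROP} to the $3$-variate summand first instead.)
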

\begin{proof}
Let $\phi$ be a sum of $2$ ROFs computing $g$. Let $v_{1}$ and $v_{2}$
be the children of the topmost $+$ gate. The proof is in two
steps. First, we reduce to the case when $|\Var(v_{1})| = |\Var(v_{2})|
= 4$. Then we use a case analysis to show that at least one of the
aforementioned conditions hold true. 
In both steps, we will repeatedly use
Proposition~\ref{prop:3-var-ROP}, which showed that any $3$-variate
ROP can be reduced to a linear polynomial by substituting a single
variable with a field constant. We now proceed with the proof. 

Suppose $|\Var(v_{1})| \leq 3$. Applying
Proposition~\ref{prop:3-var-ROP} first to $v_{1}$ and then to the
resulting restriction of $v_{2}$, one can see that there exist $i,
j \in [4]$ and $A, B \in \F$ such that $g\mid_{x_{i}=A, x_{j}=B}$ is a
linear polynomial. So condition $C1'$ is satisfied.

Now assume that $|\Var(v_{1})| = |\Var(v_{2})| = 4$.
Depending on the type of gates of $v_{1}$ and $v_{2}$, we consider $3$
cases.
\medskip
 
\noindent
\textbf{Case 1:} Both $v_{1}$ and $v_{2}$ are $\times$ gates. Then $g$
can be represented as $ M_{1}\cdot M_{2} + M_{3} \cdot M_{4}$ where
$(M_{1}, M_{2})$ and $(M_{3}, M_{4})$ are variable-disjoint
ROPs.

Suppose that for some $i$,  $|\Var(M_{i})| = 1$. Then,
$g \mid_{M_{i} \to 0}$ is a $3$-variate restriction of $f$ and is
clearly an ROP. Applying Proposition~\ref{prop:3-var-ROP} to this
restriction, we see that condition $C1'$ holds.

Otherwise each $M_{i}$ has $|\Var(M_{i})| = 2$.

Suppose $(M_{1}, M_{2})$ and $(M_{3}, M_{4})$ define distinct
partitions of the variable set.  Assume wlog that $g = M_{1}(x_{1},
x_{2}) \cdot M_{2}(x_{3}, x_{4}) + M_{3}(x_{1}, x_{3}) \cdot
M_{4}(x_{2}, x_{4})$.  If all $M_{i}$s are linear forms, it is clear
that condition $C3'$ holds. If not, assume that $M_{1}$ is of the form
$l_{1}(x_{1}) \cdot m_{1}(x_{2}) + c_{1}$ where $l_{1}, m_{1}$ are
linear forms and $c_{1} \in \F$. Now $g\mid_{l_{1} \to 0} = c_{1}
\cdot M_{2}(x_{3}, x_{4}) + M_{3}'(x_{3}) \cdot M_{4}(x_{2},
x_{4})$. Either set $x_{3}$ to make $M_{3}'$ zero, or, if that is not
possible because $M_{3}'$ is a
non-zero field constant, then set $x_{4} \to B$ where $B \in \F$.  In both
cases, by setting at most 2 variables, we obtain a linear polynomial,
so $C1'$ holds.   

Otherwise, $(M_{1}, M_{2})$ and $(M_{3}, M_{4})$ define the same
partition of the variable set. Assume wlog that $g = M_{1}(x_{1},
x_{2}) \cdot M_{2}(x_{3}, x_{4}) + M_{3}(x_{1}, x_{2}) \cdot
M_{4}(x_{3}, x_{4})$. If one of the $M_{i}$s is linear, say wlog that
$M_{1}$ is a linear form, then $g \mid_{M_{4} \to 0}$ is a 2-variate
restriction which is also a linear form, so $C1'$ holds. Otherwise,
none of the $M_{i}$s is a linear form. Then each $M_{i}$ can be
represented as $l_{i} \cdot m_{i} + c_{i}$ where $l_{i}, m_{i}$ are
univariate linear forms and $c_{i} \in \F$. We consider a $2$-variate
restriction which sets $l_{1}$ and $m_{4}$ to $0$. (Note that
$\Var(l_{1}) \cap \Var(m_{4}) = \emptyset$.) Then the resulting
polynomial is a linear form, so $C1'$ holds.
\medskip

\noindent
\textbf{Case 2:} Both $v_{1}$ and $v_{2}$ are $+$ gates. Then $g$ can
be written as $f = M_{1} + M_{2} + M_{3} + M_{4}$ where $(M_{1},
M_{2})$ and $(M_{3}, M_{4})$ are variable-disjoint ROPs.

Suppose $(M_{1}, M_{2})$ and $(M_{3}, M_{4})$ define distinct
partitions of the variable set.

Suppose further that there exists $M_{i}$ such that $|\Var(M_{i})| =
1$.  Wlog, $\Var(M_1) = \{x_1\}$, $\{x_1,x_2\} \subseteq \Var(M_3)$,
and $x_3 \in \Var(M_4)$. 
Any setting to $x_2$ and $x_4$ results in a
linear polynomial, so $C1'$ holds.

So assume wlog that $g = M_{1}(x_{1}, x_{2}) + M_{2}(x_{3}, x_{4}) +
 M_{3}(x_{1}, x_{3}) + M_{4}(x_{2}, x_{4})$. Then for $A, B \in \F$,
 $g\mid_{x_{1}=A, x_{4}=B}$ is a linear polynomial, so $C1'$ holds.
 
Otherwise, $(M_{1}, M_{2})$ and $(M_{3}, M_{4})$ define the same
partition of the variable set. Again, if say $|\Var(M_{1})| =1$, then
setting two variables from $M_2$ shows that $C1'$ holds. So assume
wlog that $g = M_{1}(x_{1}, x_{2}) + M_{2}(x_{3}, x_{4}) +
M_{3}(x_{1}, x_{2}) + M_{4}(x_{3}, x_{4})$. Then for $A, B \in \F$,
$g\mid_{x_{1}=A, x_{3}=B}$ is a linear polynomial, so again $C1'$
holds.
\medskip

\noindent
\textbf{Case 3:} One of $v_{1}, v_{2}$ is a $+$ gate and the other is
a $\times$ gate. Then $g$ can be written as $g = M_{1} + M_{2} + M_{3}
\cdot M_{4}$ where $(M_{1}, M_{2})$ and $(M_{3}, M_{4})$ are
variable-disjoint ROPs. Suppose that $|\Var(M_{3})| = 1$. Then
$g \mid_{ M_{3} \to 0}$ is a $3$-variate restriction which is a
ROP. Using Proposition~\ref{prop:3-var-ROP}, we get a $2$-variate
restriction of $g$ which is also linear, so $C1'$ holds. The same
argument works when $|\Var(M_{4})| = 1$. So assume that $M_{3}$ and
$M_{4}$ are bivariate polynomials.

Suppose that  $(M_{1}, M_{2})$ and $(M_{3}, M_{4})$ define distinct
partitions of the variable set. 
Assume wlog that $g = M_{1} + M_{2} + M_{3}(x_{1}, x_{2}) \cdot
M_{4}(x_{3}, x_{4})$, and $x_3, x_4$ are separated by $M_1, M_2$.
Then $g\mid_{M_{3} \to 0}$ is a $2$-variate restriction which is also
linear, so $C1'$ holds.
  
Otherwise $(M_{1}, M_{2})$ and $(M_{3}, M_{4})$ define the same
partition of the variable set. Assume wlog that  $g = M_{1}(x_{1},
x_{2}) + M_{2}(x_{3}, x_{4}) + M_{3}(x_{1}, x_{2}) \cdot M_{4}(x_{3},
x_{4})$.  If $M_{1}$ (or $M_{2}$) is a linear form, then consider a
$2$-variate restriction of $g$ which sets $M_{4}$ (or $M_{3}$) to
$0$. The resulting polynomial is a linear form. Similarly if $M_{3}$
(or $M_{4}$) is of the form $l \cdot m + c$ where $l, m$ are
univariate linear forms, then we consider a $2$-variate restriction
which sets $l$ to $0$ and some $x_{i} \in \Var(M_{4})$ to a field
constant. The resulting polynomial again is a linear form.  In all
these cases, $C1'$  holds. 
 
The only case that remains is that $M_{3}$ and $M_{4}$ are linear
forms while $M_{1}$ and $M_{2}$ are not. Assume that
$M_1 = (A_1x_1+B_1)(A_2x_2+B_2)+C$ and
$M_{3} = A_{3} x_{1} + B_{3} x_{2} + C_{3}$. Then
$\partial_{x_{1}}(g) = A_1(A_2x_{2}+B_2) + A_{3} M_{4}$  and
$\partial_{x_{2}}(g) = (A_1 x_{1} +B_1)A_2 + B_{3} M_{4}$.
It follows that
$B_{3} \cdot \partial_{x_{1}}(g) - A_{3} \cdot \partial_{x_{2}}(g) +
A_1A_2A_3 x_1 - A_1A_2B_3x_2 \in \F$, 
and hence the polynomials $x_{1}$, $x_{2}$,
$\partial_{x_{1}}(g)$, $\partial_{x_{2}}(g)$ are affinely
dependent. Therefore, condition $C2'$ 
of the lemma is satisfied. 
\qed \end{proof}

\begin{lemma}
  \label{lem:no-structure}
If $\alpha, \beta, \gamma$ satisfy conditions $C1,C2,C3$ from the
statement of Theorem~\ref{thm:hard-4-variate}, then the polynomial
$f^{\alpha,\beta,\gamma}$ does not satisfy any of the properties
$C1',C2',C3'$ from Lemma~\ref{lem:structure-2-sum-4-variate}.
\end{lemma}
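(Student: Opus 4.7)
The plan is to show that under hypotheses C1, C2, C3 the polynomial $f = f^{\alpha,\beta,\gamma}$ violates each of $C1'$, $C2'$, $C3'$ in turn. For $C1'$, I would enumerate the $\binom{4}{2}=6$ ways of freezing two variables $\{x_i, x_j\}$ to field constants; in each case the coefficient of the monomial $x_k x_l$ in the restriction (where $\{x_k, x_l\}$ is the complementary pair) is the same as in $f$, namely one of $\alpha, \beta, \gamma$, so by C1 the restriction is never linear.

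For $C2'$, fix a pair $(x_i, x_j)$ and write out the generic affine relation $c_1 x_i + c_2 x_j + c_3\,\partial_{x_i}(f) + c_4\,\partial_{x_j}(f) = c_0$. Each partial derivative is a linear form in the three remaining variables, so comparing coefficients of the two variables $\{x_k, x_l\}$ not in $\{x_i, x_j\}$ yields a homogeneous $2 \times 2$ system in $(c_3, c_4)$ whose determinant, depending on the pair, is one of $\pm(\alpha^2-\beta^2)$, $\pm(\beta^2-\gamma^2)$, or $\pm(\gamma^2-\alpha^2)$. By C2 all three are nonzero, forcing $c_3 = c_4 = 0$ and then $c_1 = c_2 = c_0 = 0$.

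The main step is ruling out $C3'$. Suppose $f = l_1 l_2 + l_3 l_4$ with each pair variable-disjoint. The degree-$2$ monomials contributed by a variable-disjoint product $l_a l_b$ form the edge set of a complete bipartite graph on $(\Var(l_a), \Var(l_b))$ inside $K_4$, and by C1 all six degree-$2$ coefficients of $f$ are nonzero, so the union of the two bipartite edge sets must cover $E(K_4)$. A case analysis over the possible part sizes $(1,1), (1,2), (1,3), (2,2)$ then shows that both pairs must be $2$-$2$ partitions, and must be \emph{distinct} partitions; there are only three such, namely $P_1, P_2, P_3$, corresponding to the three perfect matchings of $K_4$. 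Wlog $(l_1, l_2)$ realizes $P_i$ with blocks $\{x_a, x_b\}$ and $V_B$; after a harmless internal relabeling of $V_B$, Lemma~\ref{lem:commutator} applies and yields $l_2 \mid \triangle_{ab}(f)$. A direct computation (e.g.\ $\triangle_{12}(f) = -\beta\gamma x_3^2 + (\alpha^2-\beta^2-\gamma^2) x_3 x_4 - \beta\gamma x_4^2$ for $i=1$, and analogously for $i=2,3$ by the symmetry permuting $(\alpha,\beta,\gamma)$) shows that $\triangle_{ab}(f)$ is a nonzero homogeneous quadratic in the two variables of $V_B$. Since it is homogeneous, divisibility by $l_2$ forces it to split into linear forms over $\F$, which is equivalent to its discriminant---exactly $D_i$---being a square in $\F$; but C3 rules out squareness of each $D_i$, giving a contradiction. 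The principal obstacle is this $C3'$ step: one must carry out the matching/covering argument on $K_4$ to reduce to two distinct $2$-$2$ partitions, and then line up the three partitions $P_1, P_2, P_3$ one-to-one with the discriminants $D_1, D_2, D_3$ of C3 via Lemma~\ref{lem:commutator}, so that C3 simultaneously blocks all configurations.
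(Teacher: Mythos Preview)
Your proposal is correct and follows essentially the same route as the paper's proof. The treatments of $C1'$ and $C2'$ match the paper's in spirit (the paper argues $C2'$ by a wlog choice of $i,j$ and a direct linear-dependence check, which amounts to your $2\times 2$ determinant). For $C3'$, there is a small but genuine expository difference: the paper disposes of the equal-partition case and the non--$2$--$2$ cases by invoking the already-established $\neg C1'$ (e.g.\ if $\Var(l_1)=\Var(l_3)$ has size at most $2$, freezing those variables makes $f$ linear), whereas you run a self-contained bipartite edge-covering argument on $K_4$. Both land at the same configuration of two distinct $2$--$2$ partitions, after which both proofs invoke Lemma~\ref{lem:commutator}; the paper extracts an explicit root $\delta$ of the resulting quadratic, while you phrase the obstruction as the discriminant $D_i$ failing to be a square---these are equivalent. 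Your ``one-to-one'' matching of the three partitions with $D_1,D_2,D_3$ is exactly what underlies the paper's ``wlog'' reduction to $D_1$.
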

\begin{proof}
  \noindent $\mathbf{C1 \Rightarrow \neg C1'}$: Since
  $\alpha\beta\gamma\neq 0$, $f$ contains all possible degree $2$
  monomials. Hence after setting $x_i=A$ and $x_j=B$, the monomial
  $x_{k}x_{l}$ where $k,l \in [4] \backslash \{i, j\}$ still survives.
\medskip

  \noindent $\mathbf{C2 \Rightarrow \neg C2'}$: The proof is by
  contradiction. Assume to the contrary that for some $i,j$, wlog say
  for $i=1$ and $j=2$, the polynomials $x_{1}, x_{2},
  \partial_{x_{1}}(f), \partial_{x_{2}}(f)$ are affinely dependent.
  Note that $\partial_{x_{1}}(f) = \alpha x_{2} + \beta x_{3} + \gamma
  x_{4}$ and $\partial_{x_{2}}(f) = \alpha x_{1} + \gamma x_{3} +
  \beta x_{4}$. This implies that the vectors $ (1, 0, 0, 0)$, $(0, 1,
  0, 0)$, $(0, \alpha, \beta, \gamma)$ and $(\alpha, 0, \gamma,
  \beta)$ are linearly dependent. This further implies that the
  vectors $(\beta, \gamma)$ and $(\gamma, \beta)$ are linearly
  dependent. Therefore, $\beta = \pm \gamma$, contradicting C2.
  \medskip

  \noindent $\mathbf{C1 \wedge C2 \wedge C3 \Rightarrow \neg C3'}$:
  Suppose, to the contrary, that $C3'$ holds. That is, $f$ can be written as
  $f = l_{1} \cdot l_{2} + l_{3} \cdot l_{4}$ where
  $(l_{1}, l_{2})$ and $(l_{3}, l_{4})$
  are variable-disjoint linear forms. By the preceding arguments, we
  know that $f$ does not satisfy $C1'$ or $C2'$. 

First consider the case when $(l_{1}, l_{2})$ and $(l_{3}, l_{4})$
define the same partition of the variable set. Assume wlog that
$\Var(l_1) = \Var(l_3)$, 
$\Var(l_2) = \Var(l_4)$, and $|\Var(l_1)| \le 2$. Setting the
variables in $l_1$ to any field constants yields a linear form, so $f$
satisfies C1', a contradiction.  

Hence it must be the case that $(l_{1}, l_{2})$ and $(l_{3}, l_{4})$
define different partitions of the variable set. Since all degree-2
monomials are present in $f$, each pair $x_i$, $x_j$ must be separated
by at least one of the two partitions. This implies that both
partitions have exactly 2 variables in each part.  Assume without loss
of generality that $f = l_{1}(x_{1}, x_{2}) \cdot l_{2}(x_{3}, x_{4})
+ l_{3}(x_{1}, x_{3}) \cdot l_{4}(x_{2}, x_{4})$.

At this point, we use properties of the commutator of $f$; recall
Definition~\ref{def:commutator}.  By
Lemma \ref{lem:commutator}, we know that  $l_{2}$ divides
$\triangle_{12}f$. We compute $\triangle_{12}f$ explicitly for our
candidate polynomial: 
\begin{align*}
\begin{aligned}
 \triangle_{12}f &= ( \alpha x_{3} x_{4})(\alpha + (\beta +
 \gamma)(x_{3} + x_{4}) + \alpha x_{3} x_{4}) \\
 &~~~
 - (\beta x_{4} + \gamma x_{3} + \alpha x_{3}x_{4})(\beta x_{3} +
 \gamma x_{4} + \alpha  x_{3}x_{4}) \\ 
 &= -\beta\gamma(x_{3}^{2} + x_{4}^{2}) +
 (\alpha^{2}-\beta^{2}-\gamma^{2}) x_{3}x_{4}\\ 
 \end{aligned}
 \end{align*}
Since $l_2$ divides $\triangle_{12}f$, $\triangle_{12}f$ is not
irreducible but is the product of two linear factors. Since
$\triangle_{12}f(0, 0) = 0$, at least one of the linear factors of
$\triangle_{12}f$ must vanish at $(0, 0)$. Let $x_{3} - \delta x_{4}$
be such a factor. Then $\triangle_{12}(f)$ vanishes not only at
$(0,0)$, but whenever $x_3 = \delta x_4$. Substituting $x_{3} = \delta
x_{4}$ in $\triangle_{12}f$, we get
 \begin{align*}
 \begin{aligned}
   -\delta^{2}\beta\gamma - \beta\gamma
   + \delta(\alpha^{2} - \beta^{2}  - \gamma^{2}) = 0 
 \end{aligned}
 \end{align*}
 Hence $\delta$ is of the form
 \[
\delta = \frac{- (\alpha^{2} - \beta^{2}  - \gamma^{2}) \pm \sqrt{(\alpha^{2}
 - \beta^{2}  - \gamma^{2})^2 - 4\beta^2\gamma^2 } }{-2\beta\gamma}
 \]
Hence $2\beta\gamma\delta -
(\alpha^{2} - \beta^{2}  - \gamma^{2})$ is a root of the equation
 $X^2-D_1 = 0$, contradicting the assumption that C3 holds. 

Hence it must  be the case that $C3'$ does not hold.
\qed\end{proof}

With this, the proof of Theorem~\ref{thm:hard-4-variate} is complete.

The conditions imposed on $\alpha, \beta, \gamma$ in
Theorem~\ref{thm:hard-4-variate} are tight and irredundant. Below we
give some explicit examples over the field of reals.

\begin{enumerate}
\item $f = 2(x_{1}x_{2} + x_{3}x_{4}) + 2(x_{1}x_{3} + x_{2}x_{4}) +
  3(x_{1}x_{4} + x_{2}x_{3})$ satisfies conditions
  C1 and C3 from the Theorem but not C2; $\alpha=\beta$. A
  $\sum^2\cdot \ROP$ representation for $f$ is $f = 2(x_{1} +
  x_{4})(x_{2} + x_{3}) + 3(x_{1}x_{4} + x_{2}x_{3})$.
\item $f = 2(x_{1}x_{2} + x_{3}x_{4}) - 2(x_{1}x_{3} + x_{2}x_{4}) +
  3(x_{1}x_{4} + x_{2}x_{3})$ satisfies conditions
  C1 and C3 but not C2; $\alpha=-\beta$. A
  $\sum^2\cdot \ROP$ representation for $f$ is $f = 2(x_{1} -
  x_{4})(x_{2} - x_{3}) +   3(x_{1}x_{4} + x_{2}x_{3})$.
\item $f = (x_{1}x_{2} + x_{3}x_{4}) + 2(x_{1}x_{3} + x_{2}x_{4}) +
  3(x_{1}x_{4} + x_{2}x_{3})$  satisfies conditions
  C1 and C2 but not C3. A
  $\sum^2\cdot \ROP$ representation for $f$ is 
  $f = (x_{1} + x_{3})(x_{2} + x_{4}) +
  2(x_{1} + x_{2}) (x_{3} + x_{4})$.
\end{enumerate}

\section{Conclusions}
\label{sec:concl}

\begin{enumerate}
\item We have seen in Proposition~\ref{prop:easy-upper-bound} that every
  $n$-variate multilinear polynomial ($n \ge 4$) can be written as the
  sum of $3 \times 2^{n-4}$ ROPs. We have also shown in
  Lemma~\ref{lem:lower-bound} that there are $n$-variate multilinear
  polynomials that require $\lceil n/2 \rceil$ ROPs in a sum-of-ROPs
  representation. Between $\lceil n/2 \rceil$ and $3 \times 2^{n-4}$,
  what is the true tight bound?
\item We have shown in Theorem~\ref{thm:hierarchy} that for each $k$,
  $\sum^k\cdot \ROP$ can be separated from $\sum^{k-1}\cdot \ROP$ by a
  polynomial on $2k-1$ variables.  Can we separate these classes with
  fewer variables? Note that any separating polynomial must have
  $\Omega(\log k)$ variables.
\item In particular, can 4-variate multilinear polynomials separate
  sums of 3 ROPs from sums of 2 ROPs over every field? If not, what is
  an explicit example?
%\item 
\end{enumerate}

\bibliographystyle{plain}
\bibliography{HOR}

\end{document}